\theoremstyle{plain}
\newtheorem{theorem}{Theorem}
\newtheorem{proposition}{Proposition}
\newtheorem{lemma}{Lemma}
\newtheorem{definition}{Definition}
\theoremstyle{definition}
\newtheorem{example}{Example}
\theoremstyle{remark}
\newtheorem{remark}{Remark}
\newcommand{\beq}{\begin{eqnarray}}
\newcommand{\eeq}{\end{eqnarray}}
\newcommand{\field}[1]{\mathbb{#1}}
\newcommand{\F}{\field{F}}
\newfont{\bbb}{msbm10 scaled 500}
\newfont{\bb}{msbm10 scaled 1100}
\newcommand{\NN}{\mbox{\bb N}}
\newcommand{\cv}{{\bf c}}
\newcommand{\ev}{{\bf e}}
\newcommand{\lv}{{\bf l}}
\newcommand{\vv}{{\bf v}}
\newcommand{\xv}{{\bf x}}
\newcommand{\yv}{{\bf y}}
\newcommand{\Bc}{{\cal B}}
\newcommand{\Cc}{{\cal C}}
\newcommand{\Ec}{{\cal E}}
\newcommand{\Gc}{{\cal G}}
\newcommand{\Nc}{{\cal N}}
\newcommand{\Pc}{{\cal P}}
\newcommand{\Sc}{{\cal S}}
\newcommand{\Tc}{{\cal T}}
\newcommand{\Vc}{{\cal V}}
\definecolor{OXO-emph}{RGB}{153,0,0}
\newcommand\ceilb[1]{\left\lceil #1 \right\rceil}
\newcommand\floorb[1]{\left\lfloor #1 \right\rfloor}
\newcommand{\algrule}[1][.2pt]{\par\vskip.5\baselineskip\hrule height #1\par\vskip.5\baselineskip}
\DeclareMathAlphabet{\mathpzc}{OT1}{pzc}{m}{it}
\renewenvironment{proof}[1][\proofname]{{\bfseries #1.}}{\qed}
\newcounter{ALC@tempcntr}
\newcommand{\LCOMMENT}[1]{%
    \setcounter{ALC@tempcntr}{\arabic{ALC@rem}}
    \setcounter{ALC@rem}{1}
     \hfill~~~~~~~~~~ ~~~~~~~~~{\bf /*~~#1 */}
    \setcounter{ALC@rem}{\arabic{ALC@tempcntr}}
}%
\title{Batch Codes through Dense Graphs without Short Cycles}
\author[1]{Alexandros~G.~Dimakis$^\P$\thanks{A.~G. Dimakis would like to acknowledge support from grants NSF CCF-1422549, CCF-1344364, and CCF-1344179.}}
\author[2]{Anna G\'{a}l$^\S$\thanks{A.~G\'{a}l would like to acknowledge support from grant NSF CCF-1018060.}} 
\author[1]{Ankit Singh Rawat$^\sharp$} 
\author[2]{Zhao Song$^\ddagger$} 
\affil[1]{Department of Electrical and Computer Engineering, The University of Texas at Austin, TX \, {$^\P$dimakis@austin.utexas.edu \quad $^\sharp$ankitsr@utexas.edu}}
\affil[2]{Department of Computer Science, The University of Texas at Austin, TX \quad \quad \quad \quad \quad \quad{$^\S$panni@cs.utexas.edu \quad $^\ddagger$zhaos@utexas.edu} }
\begin{document}

\maketitle

\begin{abstract}

Consider a large database of $n$ data items that need to be stored using $m$ servers.
We study how to encode information so that a large number $k$ of read requests can be performed \textit{in parallel} while the rate remains constant (and ideally approaches one). 
This problem is equivalent to the design of multiset Batch Codes introduced by Ishai, Kushilevitz, Ostrovsky and Sahai~\cite{batch}.

We give families of multiset batch codes
 with asymptotically optimal rates of the form $1-1/\text{poly}(k)$ 
and a number of servers $m$ scaling polynomially in the number of read 
requests $k$. 
An advantage of our batch code constructions over most previously
known multiset batch codes is explicit and deterministic decoding 
algorithms and asymptotically optimal fault tolerance.
 
Our main technical innovation is a graph-theoretic method of designing 
multiset batch codes using dense bipartite graphs with no small cycles. 
We modify prior graph constructions of dense, high-girth graphs 
to obtain our batch code results. 
We achieve close to optimal tradeoffs between the parameters
for bipartite graph based batch codes.

\end{abstract}

\thispagestyle{empty}



\section{Introduction}
Batch codes were introduced by Ishai, Kushilevitz, Ostrovsky and Sahai 
in~\cite{batch}, motivated by applications to load balancing in distributed storage  and private information retrieval.
An $(n,N,k,m)$ batch code encodes a string $\xv$ of $n$ symbols to a string $\yv$ 
of $N$ symbols to be stored at  
$m$ distinct servers (also called buckets), 
so that any $k$ read requests for symbols of $\xv$ can be performed by 
reading at most $1$ (more generally $t$) symbols from each bucket.

In this paper we study batch codes in the context of their applications in
distributed storage.
Consider a large database of $n$ data items that need to be stored using 
$m$ servers. 
The simplest and most frequently used way of introducing redundancy in distributed storage systems is replication. Assume that the whole database is replicated $k$ times using $m=k$ servers.
One advantage of $k$-replication is that \textit{any} $k$ read requests can be processed in parallel\footnote{We assume that each server can serve only one read request at a time, but this can be easily generalized.} This includes $k$ requests to read one item, $k/2$ requests to read any two items, \textit{etc.} Further, replication provides fault tolerance since any $k-1$ server failures can be tolerated with no data loss. The disadvantage, of course, is that the rate of this storage scheme is vanishing as $1/k$. This large storage overhead of replication is a major cost bottleneck for large-scale storage clusters that motivates the use of distributed storage codes in systems like Windows Azure and Hadoop~\cite{azure12,sathiamoorthy2013xoring}.

The central question is how to encode information so that a scaling number $k$ of read requests can be performed in parallel while the rate remains constant (and ideally approaches one). This problem is equivalent
to designing {\em multiset} batch codes with good parameters.
Multiset batch codes were also introduced in~\cite{batch}. 
These are batch codes that allow read requests that form a multiset, that is the same item $x_i$ from $\xv$ may be requested multiple times.
For any $k$ (not necessarily distinct) requests $x_{i_1}, \ldots , x_{i_k}$
there must be a partition of the $m$ servers into $k$ disjoint groups
$S_1, \ldots , S_k$, such that the $j$-th request $x_{i_j}$ can be 
recovered by reading at most $1$ symbol from each server of the group $S_j$.
Thus, in the context of distributed storage, multiset batch codes also allow multiple parallel reads, a very useful property when storing frequently accessed data. 
For example, if item $x_1$ is requested by $k$ different users, it can be retrieved simultaneously, by accessing distinct groups of servers, and 
reading only one symbol 
from each server.

The central code design question
for this scenario  involves minimizing the number of used servers $m$ 
while maximizing the rate $\rho=n/N$ for some given number of objects $n$ and required number of parallel reads $k$. 
Clearly, for any scheme $m \geq k$. Note that $m=k$ is achievable 
by $k$-replication but with vanishing rate $\rho=1/k$.
An important contribution of~\cite{batch} was the design of codes that for any constant rate $\rho<1$ achieve polynomial dependency of the number of servers
on $k$, that is  $m=k^c$ for some constant $c$ that depends on the rate $\rho$.
One limitation is that this exponent $c$ is fairly large and grows arbitrarily large as the rate $\rho$ 
approaches $1$.
We note that for low rates~\cite{batch} proposed a different construction that is near-optimal. 
Specifically, they showed that using Reed-Muller codes it is possible to 
obtain batch codes of rate $\rho=1/k^\epsilon$ and $m=k \log^{2+1/\epsilon+o(1) } k$ for any constant $\epsilon>0$.
It  remained open  if it is possible to construct multiset batch codes with 
rates approaching optimality $\rho=1-o(1)$. 
Further, it would be ideal to achieve this using few servers, {\em i.e.}, $m=k^c$ for some small exponent $c$.
By a connection between multiset batch codes and smooth codes
observed by Ishai et al. (Theorem 3.9 in \cite{batch}),
the recent constructions of high rate locally decodable codes
by Kopparty et al. \cite{KSY} imply multiset batch codes with these
properties, e.g. rate $1 - O(\log \log k / \log k)$ achieved with 
$m < k^{1+O(1/ \log \log k)}$, however this has not been explicitly 
mentioned in the literature. Other constructions of high rate locally decodable codes appear in
\cite{GKS12, HOW13}. 

As far as we know, 
the only constructions of batch codes
published after~\cite{batch} have been replication based batch codes
- referred to as \emph{combinatorial batch codes} in the literature 
(see e.g. \cite{BB14, SG14, RR08, PSW08, bhattacharya2012combinatorial, brualdi2010combinatorial, bujtas2011optimal}). Note that combinatorial batch codes 
do not provide multiset batch codes. 

A limitation of most previous multiset batch codes including 
all multiset batch codes of \cite{batch} with polynomial number of servers
$m = \text{poly}(k)$ is that they do not have explicit deterministic decoders.
%
They provide randomized decoding algorithms, and imply the existence
of the disjoint groups of servers required for decoding, but do not construct
the disjoint groups, and do not identify which servers to use for 
the parallel reads explicitly. 
The multiset batch codes with deterministic decoders in \cite{batch}
have superpolynomial number of servers with respect to $k$.

\noindent \textbf{Contributions:} 
We introduce the first families of multiset batch codes 
with explicit and deterministic decoding algorithms
and a polynomial number of servers with respect to $k$.
Our codes also 
achieve asymptotically optimal rates $1-o(1)$,
and asymptotically optimal fault tolerance. 
Our constructions exhibit a tradeoff between $m$ (the number of servers) and 
the rate as functions of $k$.
Note that we are trying to minimize $m$ and maximize the rate
as functions of $k$. 
We obtain several constructions that give different tradeoffs.
The smallest $m$ we can achieve
(as a function of $k$) while still having
rate $1 - o(1)$ is  
$m=k^{3+ \epsilon}$  for any fixed $\epsilon > 0$,
achieved with rate at least $1 - \frac{1}{k^{\epsilon /4}}$.
These tradeoffs are close to optimal for bipartite graph based 
multiset batch codes.
A different setting of parameters in this construction gives
$m = O(k^3)$ servers and rate at least  $1 - \frac{1}{c}$ for any 
constant $c \geq 2$.  
In another  construction we require $m= k^{4 +\epsilon}$  and achieve 
rate at least $1 - \frac{1}{k^{\epsilon}}$,
for any constant $1 \geq \epsilon > 0$.
The first construction gives a better tradeoff between the number of servers
and the rate when $\epsilon$ is small.
We also present  a construction that gives 
$m=k^4$ achieved with rate at least $1 - \frac{1}{\sqrt{k}}$. 
This gives a better rate than any of the other two constructions
when the number of servers $m = k^4$.
We also show that the zig-zag code 
construction~\cite{zigzag13} allows us to design batch codes with $m=\Theta(k^{k^{1+\epsilon} + 1 + \epsilon} )$ and rate at least $1-\frac{1}{k^{\epsilon}}$ for any $\epsilon > 0$. 
While they require substantially more servers, 
 zig-zag based codes offer the benefit of logarithmic locality, \textit{i.e.}, each of the parallel reads needs to contact only $O(\text{polylog} n)$ servers in the worst case. Table~\ref{table:parameters} summarizes the parameters obtained by various code constructions considered in this paper.

\begin{table*}[h!]
\begin{center}
    \begin{tabular}{  | p{3.8cm} | l | l |}
     \hline
    Construction & Rate & Number of servers ($m$) \\  \hline 
    Theorem~\ref{thm:balbuena2}, Section~\ref{sec:Balbuena_const} & $ \geq 1 - \frac{1}{k^{\epsilon /4}}$ (for $\epsilon >0$) &  $k^{3+ \epsilon}$ \\[3pt]\hline
    Theorem~\ref{thm:lazebnik}, Section~\ref{sec:Lazebnik_const} & $\geq 1 - \frac{1}{k^{\epsilon}}$ (for $1\geq \epsilon >0$) & $k^{4 + \tilde{\epsilon}} + o( k^{4 + \tilde{\epsilon}})$ \\[3pt] \hline
    Theorem~\ref{thm:decaen}, Section~\ref{sec:deCaen_const} &  $\geq 1 - 1/\sqrt{k}$ & $(1 + o_k(1)) k^4$  \\[3pt] \hline
    Theorem~\ref{thm:zigzag}, Section~\ref{sec:zigzag}   & $ \geq 1 - \frac{1}{k^{\epsilon}}$ (for $\epsilon >0 $)& $\Theta(k^{k^{1 + \epsilon} + 1 + \epsilon})$ \\[3pt] \hline
    Theorem~\ref{thm:balbuena3}, Section~\ref{sec:Balbuena_const}  & $ \geq 1 - \frac{1}{c}$ (for constant $c \geq 2)$ &  $O(k^{3})$ \\[3pt]\hline
    \end{tabular}
     \caption{Summary of the constructions of multiset batch codes presented in this paper. For the multiset batch codes constructed in Section~\ref{sec:Lazebnik_const}, $\tilde{\epsilon} < \epsilon $ is a rational number which depends on $\epsilon$. Note that we have deterministic decoding algorithm for all our constructions.}
      \label{table:parameters}
\end{center}
\end{table*}

Our constructions offer several benefits compared to prior work. 
The first is in decoding: the choice of which servers to use for each of 
the $k$ parallel reads is deterministic and obtained using an efficient 
algorithm. This is in contrast with previous constructions of multiset batch 
codes with polynomial number of servers with respect to $k$ 
that choose these sets randomly and 
only show the existence of assignments of disjoint groups of servers to
read requests.
The second benefit is that our multiset batch codes achieve rate
of the form $ 1 - \frac{1}{\text{poly}(k)}$ with a polynomial number 
of servers, which is larger than the rate $1 - O(\log \log k / \log k)$ 
of batch codes
implied by the recent high rate locally decodable codes of \cite{KSY}. 
Finally, our batch codes offer fault tolerance. Specifically, we show how to augment the code with extra parity symbols to provide distance asymptotically approaching that of Maximum Distance Separable (MDS) codes.
We note that while this transformation can be applied to any batch code,
batch codes with rate $1-o(1)$ are necessary to obtain 
asymptotically optimal fault tolerance.

\noindent \textbf{Techniques:} Our main technical innovation is a graph-theoretic method of designing multiset batch codes using dense bipartite graphs with no short cycles.
Consider first the case where each server stores one symbol (\textit{i.e.} $N=m$), called a \textit{primitive}
batch code. Primitive batch codes are subsequently used as building blocks to construct more general batch codes.

Consider a bipartite graph with $n$ left nodes and $N-n$ right nodes. Each left node corresponds to a data symbol and each 
right node to a parity symbol. Let the left nodes all have degree $k$, indicating that each data symbol $x_i$ is included in $k$ parities. It is not hard to see that if this bipartite graph has no 4-cycle, these $k$ parities must have disjoint neighborhoods, 
excluding $x_i$. Therefore, if there are no 4-cycles, each symbol $x_i$ can be reconstructed $k$ times in parallel, by reading these parities and their disjoint neighborhoods. 
  
This simple observation allows $k$ parallel reads for one symbol but cannot support for example $k/2$ reads for $x_1$ and 
$k/2$ reads for $x_2$ since all the parities of $x_1$ and $x_2$ might involve other common data symbols. Our main lemma shows that if there are no 4-cycles or 6-cycles, the number of parities that are blocked by each read request can be bounded by a local argument and any $k$ reads can be supported. 
Using this graph theoretic lemma we show that bipartite graphs with $n$ left nodes, $N-n$ right nodes, left degree at least $k$ and girth at least $8$
yield primitive multiset batch codes with parameters $(n,N,k,m)$. 
Therefore optimizing batch code parameters becomes an extremal graph problem: 
Maximize the left minimum degree in a (simple, undirected) bipartite graph with $N$ vertices so that it contains no 4-cycle or 6-cycle subgraphs. Here subgraphs are not necessarily induced and this condition is equivalent to the bipartite graph having girth (length of shortest cycle) at least $8$. 

The difficulty in code constructions is now clear: making the left degree $k$ large without short cycles requires many right nodes. However, as 
the number of right nodes (\textit{i.e.} parities) is $N-n$, this pushes the rate $n/N$ low. The question of constructing dense bipartite graphs of high girth and in particular bipartite graphs without 4- and 6-cycles and
as many edges as possible has been extensively studied. 
The strongest constructions of such graphs have been given by 
Lazebnik, Ustimenko, Woldar \cite{lazebnik} and by de Caen 
and Sz\'ekely \cite{decaen}. The construction by Lazebnik \textit{et al.}
is obtained by embedding Chevalley group geometries in Lie algebras~\cite{ustimenko}. The constructions of de Caen and Sz\'ekely are based on known constructions
of generalized quadrangles \cite{quadbook1, quadbook2}.
We modify a construction of Balbuena \cite{balbuena2009}
based on Latin Squares to obtain our results with $m=k^{3+ \epsilon}$
and rate at least $1 - \frac{1}{k^{\epsilon /4}}$. This 
gives  close to optimal tradeoffs  for bipartite graph based 
multiset batch codes, since as we discuss in Section~\ref{limit}
rate at least  $1 - \frac{1}{k^{\epsilon /4}}$ implies $m = \Omega(k^{3+ \epsilon/2})$.
Note that the graphs of Balbuena \cite{balbuena2009} only give rate $1/2$, 
but we apply a simple transformation to amplify the rate.
We could apply a similar transformation to any of our bipartite graph 
based constructions, but these would not lead to significantly different 
tradeoffs.

\subsection{Related work}

Our work is inspired by work of Ishai et al.~\cite{batch}. Batch codes were originally motivated for load balancing applications which are traditionally different from erasure codes like Reed-Solomon~\cite{reed1960polynomial} and information dispersal~\cite{rabin1989efficient} used for distributed data storage~\cite{plank1997tutorial}. 
The difference is that batch codes are useful for storing many small items while standard erasure codes provide fault tolerance for a single large object split in multiple symbols. 

One issue that blurs this difference, however, is the so-called repair problem of distributed storage codes~\cite{dimakis,dimakis2011survey}. When a systematic code is used for storage across servers a very common operation is the access of a single systematic symbol.
Typically, each symbol of a code is stored in a different server. When the particular server storing the desired systematic symbol has failed or is temporarily unavailable the problem of reconstructing a single symbol from as few other codeword symbols as possible arises. 
The locality of a symbol~\cite{Gopalan12,PapDim12} is the smallest number of other symbols that need to be read to reconstruct it. We refer to that set of symbols as a repair group for the given symbol. Batch codes enable multiple ($k$) repair groups for each symbol that are also disjoint and hence allow these reads to be performed in parallel, a property very useful for high performance storage systems. Since these groups must be disjoint, some of them must be small so batch codes implicitly enforce locality. The weaker problem of enabling multiple parallel reads of a single symbol was recently investigated in~\cite{availability}. The constructions we present in this paper resolve an open question stated in~\cite{availability}. 

Fault tolerance is another useful property for distributed storage codes. 
Gopalan~\textit{et al.}~\cite{Gopalan12} determined the largest possible distance for codes with locality, generalizing the Singleton bound~\cite{MacSlo}.
This was recently further generalized for nonlinear codes~\cite{PapDim12} and several explicit constructions of locally repairable codes have been investigated~(\textit{e.g.} \cite{TamoPapDim13,oggier_proj,RKSV12,KPLK12,TamoBarg}). Note that locally repairable codes are related but fundamentally different from locally decodable codes~\cite{yekhanin2010locally}. Prior batch code constructions with good rates have very poor distance properties, namely $d=o(n)$. 
We construct  batch codes that have near-optimal distance, linear in $n$ and asymptotically approaching the Singleton bound. 

We note that girth and graph representations have been considered before
(see e.g. \cite{PSW08, BB14} in the context of $2$-uniform combinatorial
batch codes, that is replication based batch codes where each message symbol is
repeated exactly twice. However, this connection is very different from our 
technical contributions. $2$-uniform combinatorial batch codes can be
viewed as graphs the following way: since each symbol is stored at 
exactly two  servers, one can represent these codes by a graph whose
vertex set is the set of servers, and an edge is placed for each symbol, 
connecting the two servers where the symbol is stored.
Note that combinatorial batch codes repeating symbols more than twice
give rise to hypergraphs. In the case of 2-uniform combinatorial  batch codes,
the corresponding graph cannot contain certain configurations as 
subgraphs, and in particular the girth of the graph is also relevant.
\cite{BB14} uses a construction of Lazebnik, Ustimenko, Woldar 
\cite{lazebnik2} of high girth graphs to construct 2-uniform combinatorial
batch codes. However, this is a different construction and a different paper
than the construction in \cite{lazebnik} that we use, and
our graph representation and connection between  high girth graphs and 
multiset batch codes is completely different.


\subsection{Organization of the paper}
\label{sec:organization}
In Section~\ref{sec:prelim}, we provide necessary background on batch codes, including the gadget lemma~\cite{batch}. 
The gadget lemma allows us to convert a primitive batch code to a 
general batch code  without the restriction that the  number of servers 
be equal to the number of code symbols. 
We describe our main technical  contribution connecting dense high-girth graphs
to multiset batch codes  in Section~\ref{sec:main_sec}. 
We first associate a natural bipartite graph representation $\Gc_{\Cc}$ 
with a systematic linear code ${\Cc}$ in Section~\ref{sec:graph}, 
and define the notion of repair groups for such codes in 
Section~\ref{sec:repair_groups}. 
In Section~\ref{sec:main_result}, we establish that 
having an induced subgraph of $\Gc_{\Cc}$ with left degree at least $k$ and girth at least $8$ is a sufficient condition for the code to be a primitive multiset batch code, which can support any sequence of $k$ reads requests. 
In Section~\ref{sec:constructions} we use this result with various 
explicit constructions of dense bipartite graphs with girth at least $8$ 
to obtain multiset batch codes with rate $1 - o(1)$.
In Section~\ref{limit}  we discuss the limitations of our bipartite graph 
based approach. 
In Section~\ref{sec:distance}, we combine the load balancing aspect of batch codes with fault tolerance, and describe a simple way to obtain batch codes that are arbitrarily close to the Singleton bound.

\textbf{Notation:} A quick note on the notation that we follow throughout the paper. We use boldface lower case letters to denote vectors. 
For a strictily positive integer $n$, $[n]$ denotes the set $\{1, 2,\ldots, n\}$. For two integers $0 \leq n_1 \leq n_2$, we use $[n_1:n_2]$ to represent the set $\{n_1, n_1 + 1,\ldots, n_2\}$. The notations which are specific to various sections of the paper are defined where we use them.

\section{Background}
\label{sec:prelim}

Let $\Sigma$ denote a finite alphabet\footnote{We subsequently assume $\Sigma$ to be a finite field.} and $\xv = (x_1, x_2,\ldots, x_n) \in \Sigma^{n}$ represent a length-$n$ message vector. We use the (multi-)set $\{i_1, i_2,\ldots, i_k\}$ to represent the sequence of $k$ read requests that correspond to the message symbols $x_{i_1}, x_{i_2},\ldots, x_{i_k}$. Note that a multiset with $k$ elements from $[n]$ denotes a sequence of $k$ read requests where some message symbols may be requested multiple times. In what follows, we also refer to a sequence of $k$ read requests as a {\em $k$-request pattern}. 
We start with the definition of batch codes~\cite{batch}:

\begin{definition}{(Batch codes \cite{batch})}
An $(n, N, k, m, t)$ batch code encodes $n$ message symbols over $\Sigma$ to $N$ code symbols over $\Sigma$. These $N$ code symbols are partitioned into $m$ buckets (servers) such that any $k$-request pattern represented by a set with $k$ distinct elements from $[n]$, {\em i.e.}, parallel reads for $k$ distinct message symbols, can be served by downloading at most $t$ code symbols from each of the $m$ buckets. 
\end{definition}

The notion of batch codes is further strengthened into {\em multiset batch codes} which also support the $k$-request patterns represented by mutlisets with $k$ elements from the set $[n]$:

\begin{definition}{(Multiset batch codes \cite{batch})}
An $(n, N, k, m, t)$ {multiset} batch code encodes $n$ message symbols over $\Sigma$ to $N$ code symbols over $\Sigma$. These $N$ coded symbols are partitioned into $m$ buckets such that any $k$-request pattern represented by a multiset with $k$ elements from $[n]$ can be served  by downloading at most $t$ code symbols from each of the $m$ buckets. In addition, all read requests are required to be served from the code symbols downloaded from disjoint sets of buckets. 
\end{definition}

\begin{remark}
A particular sub-class of batch codes where the number of buckets is equal to the length of codewords, {\em i.e.}, $m = N$, is referred to as {\em primitive batch codes}. Note that primitive batch codes correspond to batch codes with each of their buckets comprising exactly one code symbol. 
\end{remark}

\begin{remark}
In this paper we focus on batch codes with $t = 1$, {\em i.e.}, at most only one symbol can be read from each of the buckets during the process of serving a $k$-request pattern. Given an $(n, N, k, m, 1)$ batch code,  one can obtain an $(n, N, tk, m, t)$ batch code and an $(n, N, k, \ceilb{\frac{m}{t}}, t)$ batch code~\cite[Lemma~2.4.2]{batch}. In what follows, we refer to an  $(n, N, k, m, 1)$ batch code as an $(n, N, k, m)$ batch code.
\end{remark}

In \cite{batch}, Ishai {\em et al.} highlight a basic operation to convert a primitive batch code to a general batch code with suitable parameters. The following result describes this operation.
\begin{lemma}{(Gadget lemma~\cite{batch})}
\label{lem:gadget}
Given an $(n, N, k, m)$ multiset batch code $\Cc$, for any $g \in \NN$, it is possible to construct a $(gn, gN, k, m)$ multiset batch code $g \cdot\Cc$.
\end{lemma}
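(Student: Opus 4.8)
The plan is to build $g\cdot\Cc$ out of $g$ independent instances of $\Cc$ that share the same $m$ buckets, and then to serve any $k$-request pattern for $g\cdot\Cc$ by collapsing it to a single $k$-request pattern for $\Cc$.

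First I would fix the encoding. Write a message of $g\cdot\Cc$ as $(\xv^{(1)},\dots,\xv^{(g)})$ with each $\xv^{(i)}\in\Sigma^{n}$, and encode each block separately with $\Cc$, obtaining codewords $\yv^{(1)},\dots,\yv^{(g)}\in\Sigma^{N}$. Distribute the $gN$ symbols among $m$ buckets so that bucket $b$ of $g\cdot\Cc$ contains exactly the symbols $\yv^{(i)}_{\ell}$ for which position $\ell$ belongs to bucket $b$ in the code $\Cc$, ranging over all $i\in[g]$. Equivalently, bucket $b$ of $g\cdot\Cc$ is $g$ superimposed copies of bucket $b$ of $\Cc$; there are still $m$ buckets, now $gN$ code symbols and $gn$ message symbols, so the rate equals $n/N$, the rate of $\Cc$.

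Next I would handle decoding. A $k$-request pattern for $g\cdot\Cc$ is a list of $k$ requests, where the $j$-th one asks for the symbol indexed by a block $c_j\in[g]$ and a position $r_j\in[n]$, i.e., for the $r_j$-th symbol of $\xv^{(c_j)}$. Collapse this to the multiset $\{r_1,\dots,r_k\}$ of positions in $[n]$, which is a valid $k$-request pattern for $\Cc$. By the multiset batch code property of $\Cc$, there exist pairwise disjoint bucket sets $S_1,\dots,S_k\subseteq[m]$ such that, for each $j$, the $r_j$-th message symbol can be recovered from a $\Cc$-codeword by reading at most one symbol out of each bucket in $S_j$. To serve request $j$ of $g\cdot\Cc$, read from each bucket $b\in S_j$ only the copy-$c_j$ symbol stored there and apply $\Cc$'s recovery rule for position $r_j$; since these stored symbols form exactly the codeword $\yv^{(c_j)}$ obtained by encoding $\xv^{(c_j)}$ with $\Cc$, this returns the requested symbol. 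Because $S_1,\dots,S_k$ are pairwise disjoint, each bucket is touched by at most one request and yields at most one symbol, so both the $t=1$ restriction and the disjoint-sets requirement in the definition of multiset batch codes are met; hence $g\cdot\Cc$ is a $(gn,gN,k,m)$ multiset batch code.

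The only genuine idea, and the step I expect to be the crux, is this collapse. Treating the $g$ blocks independently would fail, since the blocks share the $m$ buckets and the per-block recovery sets produced by $\Cc$ need not be disjoint across blocks, so bucket conflicts could arise. Collapsing to one pattern over $[n]$ instead yields a single family $S_1,\dots,S_k$ of disjoint bucket sets, which is then simply parceled out among the $k$ requests. The subtlety to watch is that the collapse may merge distinct requests into a genuine multiset even when the original $k$ requests of $g\cdot\Cc$ are for distinct symbols; this is exactly why it is essential that $\Cc$ be a \emph{multiset} batch code, and the same observation shows the argument would not upgrade an ordinary batch code to a multiset one.
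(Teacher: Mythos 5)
Your proof is correct and matches the paper's argument essentially step for step: encode each of the $g$ blocks with $\Cc$, stack them so each bucket holds one symbol per block, collapse a $k$-request pattern for $g\cdot\Cc$ to a multiset of positions in $[n]$, invoke the multiset property of $\Cc$ to get disjoint bucket sets, and serve each request from its set by reading only the relevant block's symbol. Your closing remark correctly pinpoints why the multiset hypothesis is indispensable, which the paper leaves implicit.
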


For completeness, we include a proof of the gadget lemma. 

\begin{proof}~
We prove the Lemma by presenting an explicit construction for a $(gn, gN, k, m)$ multiset batch code $g\cdot\Cc$ which utilizes the given $(n, N, k, m )$ multiset batch code $\Cc$. Let $\xv = (x_1, x_2,\ldots, x_{gn})$ be a message vector of length $gn$. We partition the message vector into $g$ sub-vectors $\xv^1, \xv^2,\ldots, \xv^{g}$ such that 
$$
\xv^{i} = (x_{(i-1)n + 1}, x_{(i-1)n + 2},\ldots, x_{in}),~~\text{for}~i\in [g].
$$
Assuming that $\Cc(\xv^{i})$ denotes the length-$n$ codeword  corresponding to $\xv^{i}$ in $\Cc$, we assign the following codeword in $g\cdot\Cc$ to the message vector $\xv$.
$$
g\cdot\Cc(\xv) = \cv = ({\cv}_1, {\cv}_2,\ldots, \cv_N).
$$
Here, for $j \in [N] = [m]$, $\cv_j = (\Cc(\xv^{1})_j, \Cc(\xv^{2})_j,\ldots, \Cc(\xv^{g})_j)$ represents the $j$-th bucket associated with the codeword $\cv \in g\cdot\Cc(\xv)$. We now establish that $g\cdot\Cc$ is indeed a  $(gn, gN, k, m)$ multiset batch code.

It is clear from the construction described above that the codewords in $g\cdot\Cc$ are obtained by stacking up $g$ codewords in $\Cc$, one for each of the $g$ sub-vectors $\{\xv^i\}_{i \in [g]}$. Given this structure of the codewords in $g\cdot\Cc$, we can associate any $k$-request pattern for $\xv$ to a $k$-request pattern for $\xv^1 = (x_1, x_2,\ldots, x_n)$: for any $j \in [n]$, we map a read request for the message symbol $x_{(i - 1)n + j}$ to a read request for the message symbol $x_{j}$. For $j \in [n]$, let $k_j$ denote the number of read requests from the $k$-request pattern for $\xv$ that are mapped to the read requests for the message symbol $x_j$. Note that we have $\sum_{j = 1}^nk_j = k$. Since $\Cc$ is an $(n, N, k, m)$ primitive batch code, we can assign the $k$ read requests for the symbols in the sub-vector $\xv^1$  to disjoint sets of buckets. Now, for any $j \in [n]$, one can serve all $k_j$ read requests mapped to $x_j$ by reading the appropriate code symbols from the assigned sets of buckets.
\end{proof}

\section{Obtaining Multiset Batch Codes from Bipartite Graphs}
\label{sec:main_sec}

This section presents the primary technical contribution of this paper. 
Here, we focus on systematic linear codes and provide a sufficient condition for such a code to be a primitive multiset batch code. For the remainder of the paper,  we assume $\Sigma$ to be a finite field over which our linear codes are defined. In particular, the parity symbols of our codes correspond to linear combinations of the message symbols with coefficients chosen from $\Sigma$.

For every systematic linear code, we associate a bipartite graph with the code. We then show that an $[n, N]$ systematic code\footnote{An $[n, N]$ systematic code is a linear code over finite field $\Sigma$ that encodes $n$ message symbols over $\Sigma$ to length-$N$ codewords over $\Sigma$. Moreover, without loss of generality we assume that the $n$ message symbols (systematic part) appear at the first $n$ positions of the codewords in $\Cc$.} is an $(n, N, k, m = N)$ primitive multiset batch code if the associated bipartite graph has an induced subgraph with girth at least $8$ and left degree at least $k$. The proof of the fact that a systematic code which satisfies the aforementioned graph theoretic conditions is a multiset batch code has an added advantage. The proof involves showing that for such codes any $k$-request pattern can be assigned to disjoint sets of code symbols by a greedy and efficient algorithm (see Figure~\ref{algo:batch}).

\subsection{Graph representation for systematic codes}
\label{sec:graph}

Given an $[n, N]$ systematic code $\Cc$, we define a bipartite graph $\Gc_{\Cc} = (\Vc_{1} \cup \Vc_{2}, \Ec)$ with $|\Vc_1| = n$ and $|\Vc_2| = N - n$. The $n$ left nodes and the $N - n$ right nodes in the graph $\Gc_{\Cc}$ respresent the $n$ message symbols and the $N- n$ parity symbols of the code $\Cc$, respectively. Moreover, there exists an edge $(i, j) \in \Ec \subseteq \Vc_1 \times \Vc_2$ iff the $i$-th message symbol participates in the $j$-th parity symbol. Alternatively, given $G = [I_{n \times n}~|~E_{n \times (N-n)}]$ as the $n \times N$ generator matrix of the code $\Cc$, the matrix $E$ is supported on the incidence matrix of the bipartite graph $\Gc_{\Cc}$. 

\begin{example}
\label{ex:graph}
Let $\Cc$ be a $[3, 8]$ systematic code which encodes the message vector $\xv = (x_1, x_2, x_3)$ to a length-$8$ codeword $\cv$ such that 
$$
\cv = (c_1, c_2,\ldots, c_8) = (x_1, x_2, x_3, x_1 + x_2, x_1, x_2 + x_3, x_1 + x_3, x_1 + x_2 + x_3).
$$
Figure~\ref{fig:graph_C} describes the graph representation for $\Cc$. 
\end{example}

\begin{remark}
Note that the graph representation considered here is different from the prevalent Tanner graph representation of the parity check matrix of a code~\cite{modern_coding}. In the graph $\Gc_{\Cc}$, all nodes represent the symbols of a generic codeword with the left nodes and the right nodes representing the message and the parity symbols, respectively. On the other hand, the {\em variable} and {\em check} nodes in a Tanner graph represent the code symbols  and the different constraints satisfied by the codewords, respectively. However, there is a correspondence between these two graph representations for a systematic code as one can be obtained from the other.
\end{remark}

\begin{figure}[htbp]
	\centering
		\includegraphics[width=0.40\textwidth]{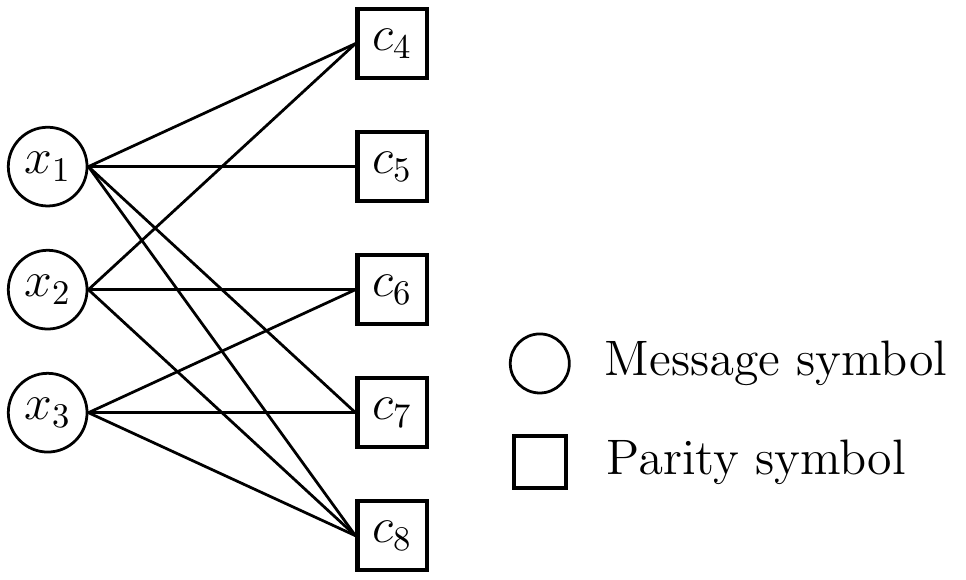}
          \caption{Graph representation $\Gc_{\Cc}$ for the $[3, 8]$ code described in Example~\ref{ex:graph}.}
	\label{fig:graph_C}
\end{figure}

\subsection{Repair groups of a systematic code}
\label{sec:repair_groups}
We now define the notion of repair groups for a systematic linear code $\Cc$. For $i \in [n]$, let $c_{i_1}, c_{i_2},\ldots, c_{i_{u(i)}}$ denote the $u(i)$ parity symbols that involve the message symbol $x_i$.  For $j \in \{i_1, i_2,\ldots, i_{u(i)}\}$, we use $\xv_{\Gamma(i, j)} \subseteq \{x_1,\ldots, x_{i-1}, x_{i + 1},\ldots, x_n\}$ to represent the collection of the message symbols other than $x_i$ that participate in the construction of the parity symbol $c_{j}$. This implies that for each $j \in \{i_1, i_2,\ldots, i_{u(i)}\}$, one can obtain $x_i$ by removing the contributions of the symbols in $\xv_{\Gamma(i, j)}$ from $c_{j}$. Noting this, we refer the $u(i)$ collections of the symbols $\big\{\xv_{\Gamma(i, j)} \cup \{c_{j}\}\big\}_{j \in \{i_1, i_2,\ldots, i_{u(i)}\}}$ as $u(i)$ repair groups for the message symbol $x_i$.

In terms of the graph representation $\Gc_{\Cc}$, for $i \in [n]$, the left node $x_i$ has degree $u(i)$. The right nodes associated with the parity symbols $c_{i_1}, c_{i_2},\ldots, c_{i_{u(i)}}$ constitute the $u(i)$ neighbors of the left node $x_i$. Moreover, for $j \in \{i_1, i_2,\ldots, i_{u(i)}\}$, the left nodes associated with the message symbols $\xv_{\Gamma(i, j)}$ correspond to all the neighbors of the right node $c_{j}$ other than the left node $x_i$.

For the $[3, 8]$ code described in Example~\ref{ex:graph} (see Figure~\ref{fig:graph_C}), the message symbol $x_1$ participates in four parity symbols $c_4, c_5, c_7,  c_8$. Therefore, the message symbol $x_1$ has the following four repair groups, 
\begin{align*}
\{x_2, c_4\},~\{c_5\},~\{x_3, c_7\},~\{x_2, x_3, c_8\}.
\end{align*}
The repair groups for the message symbols $x_2$ and $x_3$ can be listed in a similar manner.

\subsection{Sufficient condition for multiset batch codes}
\label{sec:main_result}

As our main technical contribution, the following result  presents a sufficient condition for a linear  systematic code to be a primitive multiset batch code.

\begin{theorem}
\label{thm:no_6cycle}
Let $\Cc$ be an $[n, N]$ systematic code with the graph representation $\Gc_{\Cc} = (\Vc_{1}, \Vc_{2}, \Ec)$. Assume that there exists an induced subgraph $\widetilde{\Gc}_{\Cc} =  (\Vc_{1}, \widetilde{\Vc}_{2} \subseteq \Vc_{2}, \widetilde{\Ec})\footnote{For $\widetilde{\Vc}_{2} \subseteq \Vc_2$, the edge set associated with $\widetilde{\Gc}_{\Cc}$ is defined as $ \widetilde{\Ec} := \{(i, j) \in \Ec:~j \in \widetilde{\Vc}_{2}\}.$} \subseteq \Gc_{\Cc}$ such that the following two conditions hold.
\begin{itemize}
\item[(i)] Each node in $\Vc_{1}$ has degree at least $k$ in the bipartite graph $\widetilde{\Gc}_{\Cc}$.
\item[(ii)] The bipartite graph $\widetilde{\Gc}_{\Cc} \subseteq \Gc_{\Cc}$ has girth (length of the shortest cycle) at least $8$.
\end{itemize}
Then, $\Cc$ is an $(n, N, k, m = N)$ primitive multiset batch code.
\end{theorem}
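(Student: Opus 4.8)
The plan is to prove the theorem constructively, exhibiting the greedy decoding algorithm of Figure~\ref{algo:batch} and analysing it. Given an arbitrary $k$-request pattern (a multiset over $[n]$), I will serve the requests one at a time, maintaining the invariant that the sets of buckets used for the requests served so far are pairwise disjoint. Since the code is primitive, a bucket holds a single code symbol: for $\ell\in[n]$ ``reading $x_\ell$'' reads bucket $\ell$, and reading a parity $c_j$ reads the bucket of $c_j$. For $i\in[n]$, let $c_{i_1},\ldots,c_{i_d}$ ($d\ge k$, by~(i)) be the neighbours of $x_i$ in $\widetilde{\Gc}_{\Cc}$, and set $R_s:=\xv_{\Gamma(i,i_s)}\cup\{c_{i_s}\}$. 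Because $\widetilde{\Gc}_{\Cc}$ is an \emph{induced} subgraph obtained by deleting only right vertices, every surviving parity retains its full support, so each $R_s$ is an honest repair group for $x_i$ in $\Cc$. First I would note two facts that use only the absence of $4$-cycles: $x_i\notin R_s$ for all $s$ (by the definition of $\xv_{\Gamma}$), and $R_1,\ldots,R_d$ are pairwise disjoint --- a bucket common to $R_s$ and $R_{s'}$ would be a left vertex adjacent to both $c_{i_s}$ and $c_{i_{s'}}$ (a right vertex cannot lie in an $\xv_\Gamma$-set), which with $x_i$ closes a $4$-cycle in $\widetilde{\Gc}_{\Cc}$, contradicting~(ii).

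The heart of the proof is the local claim: if $R'=\xv_{\Gamma(a,b)}\cup\{c_b\}$ is a repair group used to serve some earlier request (for a symbol $x_a$ via a parity $c_b$ of $\widetilde{\Gc}_{\Cc}$), then $R'$ meets \emph{at most one} of $R_1,\ldots,R_d$. Granting the claim, the argument closes quickly: when a request for $x_i$ arrives, the at most $k-1$ already-used repair groups meet at most $k-1$ of the $d\ge k$ pairwise-disjoint candidates $R_s$, hence some $R_s$ avoids every previously used bucket; serve the request with it, preserving the invariant. Induction over the $k$ requests then yields pairwise disjoint bucket sets, each request reading at most one symbol per bucket, which is exactly the definition of an $(n,N,k,m=N)$ primitive multiset batch code.

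To prove the claim I would argue by contradiction through a short case analysis, using that a left vertex and a right vertex can never coincide. If $R'$ meets $R_s$, then either $c_b=c_{i_s}$, or $c_b$ and $c_{i_s}$ share a common left neighbour $x_\ell$ with $\ell\notin\{a,i\}$. Suppose $R'$ met two distinct groups $R_s,R_{s'}$, so $c_{i_s}\ne c_{i_{s'}}$. One cannot have $c_b=c_{i_s}=c_{i_{s'}}$. If $c_b=c_{i_s}$, then (as $R'$ still meets $R_{s'}$ and $c_b\ne c_{i_{s'}}$) $c_{i_s}$ and $c_{i_{s'}}$ share a left neighbour $x_{\ell'}\notin\{a,i\}$, giving the $4$-cycle $x_i\,c_{i_s}\,x_{\ell'}\,c_{i_{s'}}\,x_i$. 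Otherwise $c_b$ shares $x_\ell\notin\{a,i\}$ with $c_{i_s}$ and $x_{\ell'}\notin\{a,i\}$ with $c_{i_{s'}}$; if $x_\ell=x_{\ell'}$, or if $c_b\in\{c_{i_s},c_{i_{s'}}\}$, one again finds a $4$-cycle through $x_i$, and in the remaining case the six vertices $x_i,c_{i_s},x_\ell,c_b,x_{\ell'},c_{i_{s'}}$ are pairwise distinct (the left ones because $\ell\ne\ell'$ and $\ell,\ell'\ne i$; the right ones because $c_{i_s}\ne c_{i_{s'}}$ and $c_b\notin\{c_{i_s},c_{i_{s'}}\}$) and trace a $6$-cycle. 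Every case contradicts girth $\ge 8$, establishing the claim.

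I expect the main obstacle to be exactly this final case analysis: it is the only place where the hypotheses ``no $4$-cycle'' and ``no $6$-cycle'' are both genuinely used, and one must be careful with the degenerate overlaps ($x_\ell=x_{\ell'}$, $c_b$ coinciding with one of the current parities, or $x_a$ coinciding with $x_i$) to be sure the exhibited short cycle is a genuine cycle. The surrounding pieces --- the greedy invariant, disjointness of the candidate repair groups, and the observation that deleting only right vertices from an induced subgraph preserves parity decodability --- are routine once the claim is in place.
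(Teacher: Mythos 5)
Your proposal is correct and takes essentially the same route as the paper: the paper's Lemma~\ref{lem:no_4cycle} and Lemma~\ref{lem:no_6cycle} are precisely your observations that no $4$-cycles give $k$ pairwise-disjoint repair groups and that girth $\ge 8$ forces each previously used repair group to block at most one candidate, and the greedy assignment with the same counting argument is how the paper's proof of Theorem~\ref{thm:no_6cycle} concludes. The only minor deviations are cosmetic: you serve every request by a parity-based repair group rather than handing out the systematic bucket first as the paper's Figure~\ref{algo:batch} does, and you spell out the vertex-distinctness checks in the $4$- and $6$-cycle case analysis that the paper leaves implicit.
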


Before proceeding to establish Theorem~\ref{thm:no_6cycle}, we present Lemma~\ref{lem:no_4cycle} and \ref{lem:no_6cycle}. These two lemmas characterize the interaction among those repair groups of the message symbols that are associated with a suitable induced subgraph $\widetilde{\Gc}_{\Cc} \subseteq \Gc_{\Cc}$. Note the requirement (i) in Theorem~\ref{thm:no_6cycle} requires the graph to be dense, while the requirement (ii) demands that the graph be not too dense. 

\begin{lemma}
\label{lem:no_4cycle}
Let $\Cc$ be an $[n, N]$ systematic code with the graph representation $\Gc_{\Cc} = (\Vc_{1}, \Vc_{2}, \Ec)$. Assume that there exists an induced subgraph $\widetilde{\Gc}_{\Cc} = (\Vc_{1}, \widetilde{\Vc}_{2} \subseteq \Vc_{2}, \widetilde{\Ec}) \subseteq \Gc_{\Cc}$ such that the following two conditions hold.
\begin{itemize}
\item[(i)] Each node in $\Vc_{1}$ has degree at least $k$ in the bipartite graph $\widetilde{\Gc}_{\Cc}$,
\item[(ii)] The bipartite graph $\widetilde{\Gc}_{\Cc} \subseteq \Gc_{\Cc}$ has no $4$-cycle, {\em i.e.}, cycle of length $4$.
\end{itemize}
Then, each message symbol has at least $k$ disjoint repair groups.
\end{lemma}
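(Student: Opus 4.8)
The plan is to fix a message symbol $x_i$ and exhibit $k$ repair groups of $x_i$ that are pairwise disjoint. Since each left node $x_i$ has degree at least $k$ in $\widetilde{\Gc}_{\Cc}$, there are at least $k$ parity symbols $c_{i_1}, \ldots, c_{i_k}$ corresponding to right nodes in $\widetilde{\Vc}_2$ that involve $x_i$. I would take the $k$ repair groups to be $R_\ell = \xv_{\Gamma(i, i_\ell)} \cup \{c_{i_\ell}\}$ for $\ell \in [k]$, where the neighborhoods $\Gamma(i, i_\ell)$ are computed in the \emph{full} graph $\Gc_{\Cc}$ (so that each $R_\ell$ genuinely recovers $x_i$), but where the right nodes $c_{i_\ell}$ are chosen from $\widetilde{\Vc}_2$. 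The claim is that these are pairwise disjoint.

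First, the parity symbols themselves are distinct: $c_{i_\ell} \neq c_{i_{\ell'}}$ for $\ell \neq \ell'$ since they are distinct neighbors of $x_i$, and no $c_{i_\ell}$ appears in any $\xv_{\Gamma(i, i_{\ell'})}$ since those sets contain only message symbols. So the only way $R_\ell$ and $R_{\ell'}$ can intersect is if $\xv_{\Gamma(i, i_\ell)} \cap \xv_{\Gamma(i, i_{\ell'})} \neq \emptyset$, i.e. there is a message symbol $x_j \neq x_i$ participating in both $c_{i_\ell}$ and $c_{i_{\ell'}}$. But this is exactly a $4$-cycle $x_i - c_{i_\ell} - x_j - c_{i_{\ell'}} - x_i$ in $\Gc_{\Cc}$; since $x_i$ has both $c_{i_\ell}$ and $c_{i_{\ell'}}$ as neighbors in $\widetilde{\Gc}_{\Cc}$, and the induced subgraph $\widetilde{\Gc}_{\Cc}$ contains every edge of $\Gc_{\Cc}$ incident to a vertex of $\widetilde{\Vc}_2$, all four edges of this cycle lie in $\widetilde{\Ec}$. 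This contradicts condition (ii). Hence the $k$ repair groups are disjoint.

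The one subtlety to be careful about — and the step I expect to require the most attention — is the interplay between the full graph $\Gc_{\Cc}$ and the induced subgraph $\widetilde{\Gc}_{\Cc}$: a repair group must recover $x_i$, which forces it to include \emph{all} message symbols sharing that parity in the original code, so the sets $\xv_{\Gamma(i,j)}$ are the ones coming from $\Gc_{\Cc}$, not from $\widetilde{\Gc}_{\Cc}$; yet the $4$-cycle we produce from an intersection must be shown to live inside $\widetilde{\Gc}_{\Cc}$ so that condition (ii) applies. The resolution is precisely the footnote's definition of induced subgraph on the right side: because $\widetilde{\Ec} = \{(i,j) \in \Ec : j \in \widetilde{\Vc}_2\}$ keeps every edge to a surviving right node, and both $c_{i_\ell}, c_{i_{\ell'}} \in \widetilde{\Vc}_2$, every edge of the putative $4$-cycle is retained. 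I would state this explicitly so the reader sees why passing to the subgraph does not weaken the conclusion. The rest is bookkeeping.
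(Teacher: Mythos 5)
Your proof is correct and follows essentially the same line of reasoning as the paper's: pick the $k$ (or more) parity neighbors of $x_i$ within $\widetilde{\Vc}_2$, observe that a shared message symbol $x_j$ in two of the resulting repair groups would give a $4$-cycle, and invoke condition~(ii). The "subtlety" you flag dissolves exactly as you say: since $\widetilde{\Ec}$ retains every edge of $\Ec$ incident to a node of $\widetilde{\Vc}_2$, the neighborhood $\Gamma(i,j)$ of a kept right node is identical in $\Gc_{\Cc}$ and $\widetilde{\Gc}_{\Cc}$; the paper packages this same observation by passing to the auxiliary code $\widetilde{\Cc}$ whose parity symbols are exactly those indexed by $\widetilde{\Vc}_2$.
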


\begin{proof}~
Let $\widetilde{\Cc}$ denote the systematic code associated with the subgraph $\widetilde{\Gc}_{\Cc} \subseteq \Gc_{\Cc}$. Without loss of generality, we assume that $\widetilde{\Vc}_{2} \subseteq \Vc_{2}$ corresponds to the first $|\widetilde{\Vc}_{2}|$ out of $|\Vc_{2}| = N - n$ right nodes of the graph $\Gc_{\Cc}$\footnote{If this is not the case, we can re-index the parity symbols of the codewords in $\Cc$ to get this property.}. Then, every codeword $\cv = (c_1, c_2,\ldots, c_{N}) \in \Cc$ is associated to a unique codeword $\widetilde{\cv} = (c_1, c_2,\ldots, c_{n + |\widetilde{\Vc}_{2}|}) \in \widetilde{\Cc}$ obtained by removing the last $|\Vc_2| - |\widetilde{\Vc}_2|$ code symbols from the codeword $\cv$. In the rest of the proof we focus only on the repair groups that contain the parity symbols associated with the set $\widetilde{\Vc}_{2}$. In other words, we work with the code $\widetilde{\Cc}$.

\begin{figure}[htbp]
	\centering
		\includegraphics[width=0.55\textwidth]{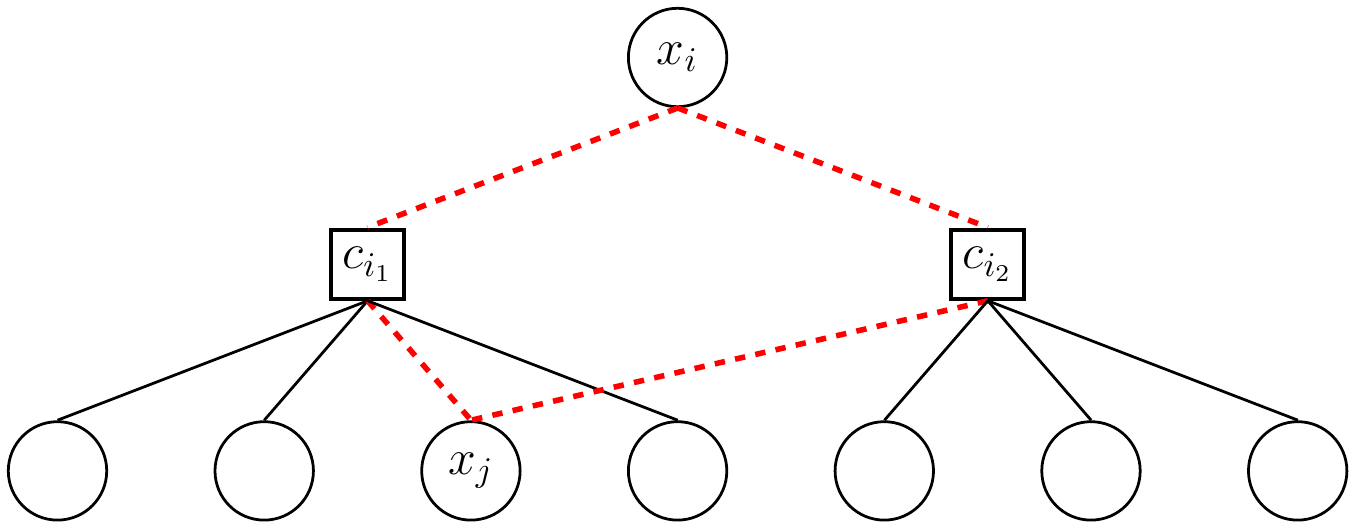}
          \caption{Illustration of a $4$-cycle $\{x_i, c_{i_1}, x_j, c_{i_2}\}$ when two repair groups of the message symbol $x_{i}$ have a common message symbol $x_j$.}
	\label{fig:4_cycle}
\end{figure}

Since each left node has degree at least $k$ in the bipartite graph $\widetilde{\Gc}_{\Cc}$, this implies that for any $i \in [n]$, the message symbol $x_i$ has at least $k$ repair groups (see Section~\ref{sec:repair_groups}). Next, we argue that all of these repair groups are disjoint, {\em i.e.}, they do not  have any common code symbol. In order to establish contradiction, we assume that for an $i \in [n]$, there exists two repair groups of the message symbol $x_i$ with at least one common code symbol. Since two repair groups for a message symbol involve distinct parity symbols, let $c_{i_1}$ and $c_{i_2}$ denote the parity symbols in these two intersecting repair groups. Assume that $x_j$, for $j \in [n]\backslash \{i \}$, be one of the common message symbol in the two underlying repair groups. This implies that the nodes $\{x_i, c_{i_1}, x_j, c_{i_2}\}$ form a cycle in the graph $\widetilde{\Gc}_{\Cc}$ (see Figure~\ref{fig:4_cycle}). This leads to contradiction as the graph $\widetilde{\Gc}_{\Cc}$ has no $4$-cycle.
\end{proof}

%
%

\begin{lemma}
\label{lem:no_6cycle}
Let $\Cc$ be an $[n, N]$ systematic code with the graph representation $\Gc_{\Cc} = (\Vc_{1}, \Vc_{2}, \Ec)$. Assume that there exists an induced subgraph $\widetilde{\Gc}_{\Cc} =  (\Vc_{1}, \widetilde{\Vc}_{2} \subseteq \Vc_{2}, \widetilde{\Ec})  \subseteq \Gc_{\Cc}$ such that the following two conditions hold.
\begin{itemize}
\item[(i)] Each node in $\Vc_{1}$ has degree at least $k$ in the bipartite graph $\widetilde{\Gc}_{\Cc}$,
\item[(ii)] The bipartite graph $\widetilde{\Gc}_{\Cc} \subseteq \Gc_{\Cc}$ has girth (length of the shortest cycle) at least $8$.
\end{itemize}
Then,  each message symbol has at least $k$ disjoint repair groups. Moreover, for any $i, j \in [n]$ with $i \neq j$, any one of the disjoint repair groups for the message symbol $x_i$ has common symbols with at most one of the disjoint repair groups for the message symbol $x_j$.
\end{lemma}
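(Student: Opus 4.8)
The plan is to prove the two assertions of Lemma~\ref{lem:no_6cycle} in sequence. The first claim---that each message symbol has at least $k$ disjoint repair groups---is immediate from Lemma~\ref{lem:no_4cycle}, since girth at least $8$ in $\widetilde{\Gc}_{\Cc}$ implies in particular that $\widetilde{\Gc}_{\Cc}$ has no $4$-cycle, and condition (i) is the same in both lemmas. So the work is entirely in establishing the ``moreover'' part. As in the proof of Lemma~\ref{lem:no_4cycle}, I would first pass to the code $\widetilde{\Cc}$ associated with the induced subgraph $\widetilde{\Gc}_{\Cc}$, so that all repair groups under discussion use only parity symbols indexed by $\widetilde{\Vc}_2$, and all graph-theoretic reasoning takes place inside $\widetilde{\Gc}_{\Cc}$, which has girth at least $8$.

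The core is a proof by contradiction. Fix distinct $i, j \in [n]$, and suppose some disjoint repair group of $x_i$---say the one built around parity symbol $c_a$ (a neighbor of $x_i$ in $\widetilde{\Gc}_{\Cc}$)---shares a common symbol with \emph{two} distinct disjoint repair groups of $x_j$, say the ones built around parity symbols $c_b$ and $c_{b'}$ (both neighbors of $x_j$, with $b \neq b'$). Recall that a repair group consists of one parity symbol together with a set of message symbols (the other participants in that parity). I would split into cases according to what the ``common symbol'' is in each of the two intersections. The two shared symbols could each be either a message symbol or a parity symbol. If a repair group of $x_i$ around $c_a$ shares the \emph{parity} symbol with a repair group of $x_j$, then $c_a = c_b$, which forces $c_a$ to be adjacent to both $x_i$ and $x_j$ and be a common parity; combined with a second shared symbol (whether $c_a = c_{b'}$, impossible since $b \neq b'$, or a shared message symbol with the $c_{b'}$ group) one traces out a short closed walk. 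The generic and most important case is: the $c_a$-group of $x_i$ shares a message symbol $x_p$ with the $c_b$-group of $x_j$, and shares a message symbol $x_q$ with the $c_{b'}$-group of $x_j$, where $p, q \in [n] \setminus \{i, j\}$.

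In that generic case I would exhibit an explicit closed walk of length $6$ (or shorter) in $\widetilde{\Gc}_{\Cc}$: traverse $x_i \to c_a \to x_p \to c_b \to x_j \to c_{b'} \to x_q \to c_a \to x_i$. Here $x_i \sim c_a$ since $c_a$ is a repair-group parity of $x_i$; $c_a \sim x_p$ and $c_a \sim x_q$ since $x_p, x_q$ are the message symbols shared from the $c_a$-group; $x_p \sim c_b$, $x_j \sim c_b$, $x_j \sim c_{b'}$, $x_q \sim c_{b'}$ for the analogous reasons. The subtlety is that this is a \emph{walk}, not necessarily a cycle, so I must rule out degeneracies that would shorten it without producing a short cycle: e.g. $x_p = x_q$ would give a $4$-cycle $x_p \to c_a \to x_q (= x_p)$... no, rather it collapses the walk, so I should handle $x_p = x_q$ separately, in which case the $c_b$- and $c_{b'}$-groups of $x_j$ would both contain $x_p$, giving a $4$-cycle $x_j, c_b, x_p, c_{b'}$, contradicting girth $\geq 8$. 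Similarly $c_b = c_{b'}$ is excluded by $b \neq b'$, and coincidences like $c_a = c_b$ reduce to the ``shared parity'' case above, which again yields a cycle of length at most $6$. After eliminating all degeneracies, the remaining walk is a genuine cycle of length at most $6$ in $\widetilde{\Gc}_{\Cc}$, contradicting condition (ii).

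\textbf{Main obstacle.} The bookkeeping of degenerate cases is where care is needed: one must enumerate every way the vertices $x_i, x_j, x_p, x_q, c_a, c_b, c_{b'}$ (and the possibility that a ``shared symbol'' is a parity rather than a message symbol) can coincide, and check that each such coincidence still forces a cycle of length $4$ or $6$ rather than merely a closed walk. The clean way to organize this is to observe that the claimed $8$-walk $x_i, c_a, x_p, c_b, x_j, c_{b'}, x_q, c_a$ (returning to $c_a$) is a closed walk of length $8$ in a bipartite graph, and any closed walk in a graph contains a cycle unless it backtracks; bipartiteness plus the girth-$8$ hypothesis means any such cycle has length exactly $4$, $6$, or $8$ --- but length $8$ would be a genuine $8$-cycle and hence allowed, so I need the extra input that the walk \emph{cannot} be a simple $8$-cycle because it revisits $c_a$ in the middle rather than only at the ends. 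That is, $c_a$ appears at positions $2$ and $8$ of the walk but the walk has length $8$, so these are the same vertex appearing non-consecutively, which forces the closed walk to decompose into two shorter closed sub-walks $x_i \to c_a$ ... wait, more precisely into the two lobes $c_a, x_p, c_b, x_j, c_{b'}, x_q, c_a$ (length $6$) and $c_a, x_i, c_a$ (length $2$, which backtracks). So the length-$6$ lobe, once its own internal degeneracies are cleared, is the cycle we want. Spelling this out carefully --- decomposing the closed walk at its first repeated vertex and arguing the resulting cycle has length $\le 6$ --- is the one genuinely delicate step; everything else is direct translation between repair-group language and adjacency in $\widetilde{\Gc}_{\Cc}$.
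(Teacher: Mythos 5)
Your proposal is correct and follows essentially the same approach as the paper: reduce to the induced subgraph, invoke Lemma~\ref{lem:no_4cycle} for the first claim, and for the second derive a contradiction by exhibiting a $4$-cycle (shared-parity case, i.e. $c_a\in\{c_b,c_{b'}\}$ or $x_p=x_q$) or a $6$-cycle (generic case). The only cosmetic difference is that you initially route the walk through $x_i$ and then trim it; the paper writes the $6$-cycle $c_a, x_p, c_b, x_j, c_{b'}, x_q$ directly, which is exactly the ``lobe'' you extract at the end.
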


\begin{proof}~
 Here, we use ideas which are similar to those employed in the proof of Lemma~\ref{lem:no_4cycle}. In particular, we associate a code $\widetilde{\Cc}$ to the subgraph $\tilde{\Gc}_{\Cc} \subseteq \Gc_{\Cc}$. Note that for every codeword $\cv = (c_1, c_2,\ldots, c_{N}) \in \Cc$, there exists a unique codeword $\widetilde{\cv} = (c_1, c_2,\ldots, c_{n + |\widetilde{\Vc}_{2}|}) \in \widetilde{\Cc}$. Again in the rest of the proof we only focus on  those repair groups that are associated with $\widetilde{\Cc}$, {\em i.e.}, the repair groups that do not contain the codes symbols $c_{n + |\widetilde{\Vc}_{2}|+1}, \ldots, c_{n}$  from a codeword $\cv = (c_1,\ldots, c_{N}) \in {\Cc}$.  

Since the sub graph $\widetilde{\Gc}_{\Cc}$ has girth at least $8$, it does not contain a $4$-cycle. It then follows from  Lemma~\ref{lem:no_4cycle} that each of the $n$ message symbols has at least $k$ disjoint repair groups. Next, we establish the second claim in the lemma that for any $i, j \in [n]$ with $i \neq j$, any one of the disjoint repair groups for the message symbol $x_i$ has common symbols with at most one of the disjoint repair groups for the message symbol $x_j$. 

\begin{figure*}[t!]
        \centering
        \begin{subfigure}[b]{0.48\textwidth}
	\centering
		\includegraphics[width=0.63\textwidth]{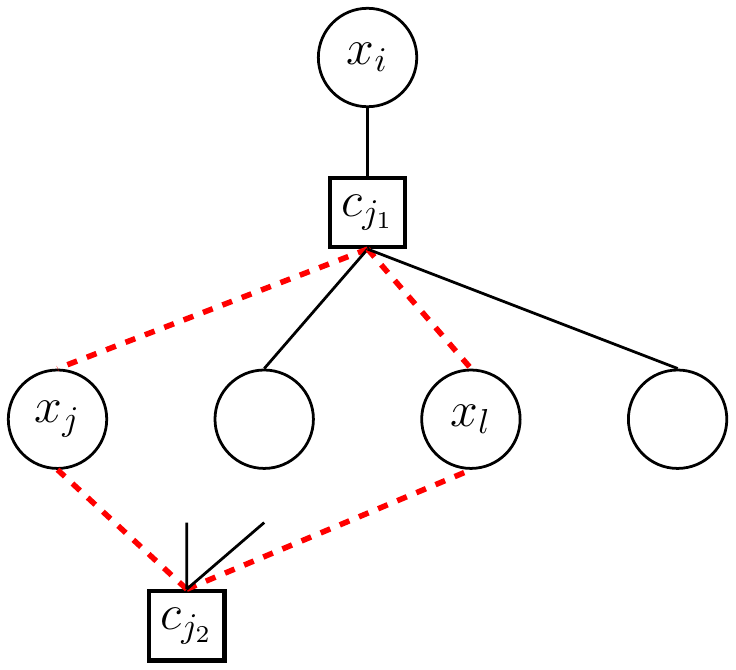}
                      \caption{$4$-cycle $\{x_j, c_{j_1}, x_l, c_{j_2}\}$}
                       \label{fig:6cycle_case1}
        \end{subfigure}%
       \quad
        \begin{subfigure}[b]{0.48\textwidth}
	\centering
		\includegraphics[width=0.63\textwidth]{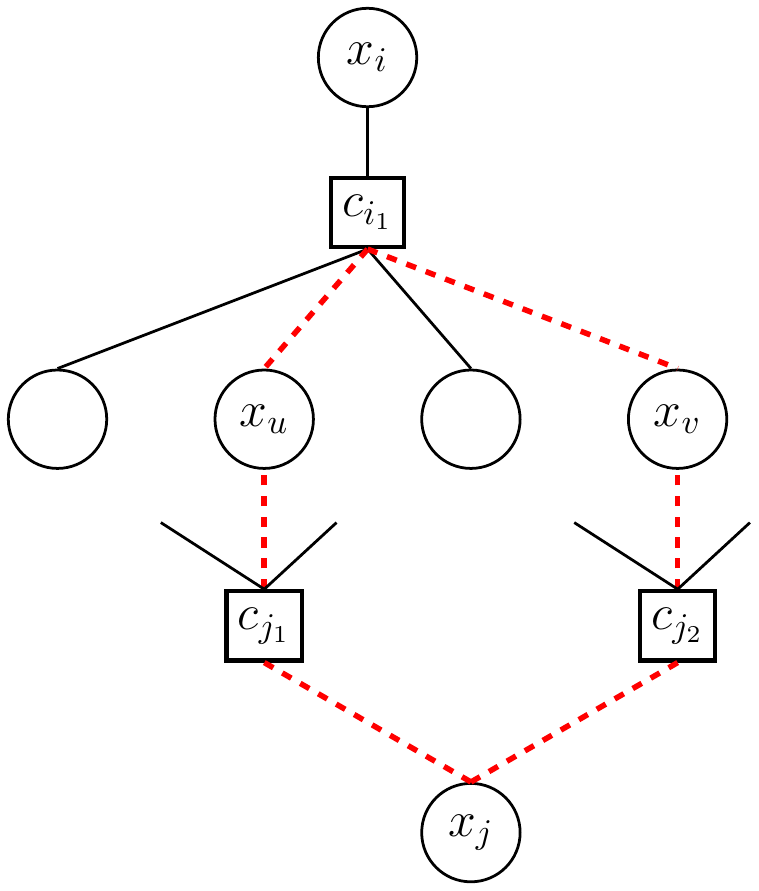}
                      \caption{ $6$-cycle $\{x_j, c_{j_1}, x_u, c_{i_1}, x_v, c_{j_2}\}$}
                      \label{fig:6cycle_case2}
        \end{subfigure}
        \caption{Illustration of the cycles in $\widetilde{\Gc}_{\Cc}$ when one repair group of the message symbol $x_i$ intersects with two disjoint repair groups of the message symbol $x_j$.}
\end{figure*}

Toward this, we assume that the opposite is true and there exist two message symbols, say $x_i$ and $x_j$, such that at least one repair group of the message symbol $x_i$  has common code symbols with at least two repair groups of the message symbol $x_j$. Let $c_{i_1}$ be the parity symbol corresponding to one such repair groups of $x_i$ that intersects with two repair groups of the message symbol $x_j$. Assume that  $c_{j_1}$ and $c_{j_2}$ represent the distinct parity symbols associated with the underlying two repair groups the message symbol $x_j$. There are two cases that need to be considered: 
\begin{itemize}
\item \textbf{$i_1 = j_1$ or $i_1 = j_2$:}~Without loss of generality, we assume that $i_1 = j_1$. Let $x_l$ be the message symbol that is common between the repair group of $x_i$ and the repair group of $x_j$ corresponding to the parity symbol $c_{j_2}$.  As described in Figure~\ref{fig:6cycle_case1}, this implies the existence of a $4$-cycle $\{x_j, c_{j_1}, x_l, c_{j_2}\}$ in the bipartite graph $\widetilde{\Gc}_{\Cc}$.
\item \textbf{$i_1 \neq j_1$ and $i_1 \neq j_2$:} In this case, let $x_{u}$ represent a common symbol between the repair group of $x_i$ and the {\em first repair group} of $x_j$ corresponding to the parity symbol $c_{j_1}$. Similarly, assume that $x_{v}$ denotes a symbol which is common between the repair group of $x_i$ and the {\em second repair group} of $x_j$ corresponding to the parity symbol $c_{j_2}$. This translates to a $6$-cycle $\{x_j, c_{j_1}, x_u, c_{i_1}, x_v, c_{j_2}\}$ in the bipartite graph $\widetilde{\Gc}_{\Cc}$ (see Figure~\ref{fig:6cycle_case2}).
\end{itemize}

Since the girth of the bipartite graph $\widetilde{\Gc}_{\Cc}$ is at least $8$, both cases above lead to contradiction. This completes the proof.

\end{proof}

Essentially, Lemma~\ref{lem:no_6cycle} implies that given an induced subgraph $\widetilde{\Gc}_{\Gc}$ with girth at least $8$, among the repair groups associated with $\widetilde{\Gc}_{\Cc}$, each repair group of one message symbol can block at most one repair group of any other message symbol. In the following, we utilize this to prove Theorem~\ref{thm:no_6cycle}.

\begin{proof}[Proof of Theorem~\ref{thm:no_6cycle}]
Given the code $\Cc$, we assign each symbol from the codeword $\cv \in \Cc$ to a different bucket. This implies that $m = N$. Here, we provide an algorithmic proof of the fact that $\Cc$ can support any $k$-request pattern by serving different read requests from the disjoint sets of buckets.  In particular, under the assumptions on the bipartite graph $\widetilde{\Gc}_{\Cc}$ stated in Theorem~\ref{thm:no_6cycle}, we provide an algorithm which always maps $k$ read requests in any $k$-request pattern to $k$ disjoint sets of code symbols (buckets). The algorithm described here is deterministic as opposed to the randomized algorithm used for Subset codes  by Ishai {\em et al.} in~\cite{batch}.

Let $\{i_1, i_2,\ldots, i_{k}\} \subset [n]$ be a multiset corresponding to a $k$-request pattern. For the $k$-request pattern and $j \in [n]$, let $k_j$ denote the number of read requests for the message symbol $x_j$ in the $k$-request pattern, {\em i.e.}, 
$$
k_j = |\{u \in [k]: i_u = j\}|,~~\text{for}~j \in [n].
$$
Note that $\sum_{j = 1}^{n}k_j = k$. Moreover, each $k$-request pattern can be alternatively defined by a set 
$$\big\{(1, k_1), (2, k_2),\ldots, (n, k_n)\big\} \in [n]\times[k],$$
 where $(j, k_j)$ implies that there are $k_j$ read requests for the message symbol $x_j$  in the $k$-request pattern. We now use the algorithm described in Figure~\ref{algo:batch} to map the $k$ read requests to disjoint sets of code symbols (buckets). 

It follows from Lemma~\ref{lem:no_6cycle} that each message symbol has at least $k$ disjoint repair groups in $\Cc$. Let $j^{\ast}  = \min\{j \in [n]: k_j > 0\}$. The algorithm can successfully assign the $k_{j^{\ast}}$ read requests for the message symbol $x_{j^{\ast}}$ to the buckets storing $x_{j^{\ast}}$ itself and $k_{j^{\ast}} - 1$ disjoint repair groups of $x_{j^{\ast}}$. In order to complete the proof of Theorem~\ref{thm:no_6cycle}, it is enough to show that for any $j > j^{\ast}$, the algorithm is always able to find $k_j$ sets of disjoint code symbols (buckets) to serve $k_j$ read requests for the message symbol $x_j$.

\begin{figure}[t!]
\algrule[1pt]
\textbf{Algorithm:} Assignment of $k$ read requests to $k$ disjoint sets of code symbols (buckets).
\algrule[1pt]
\begin{algorithmic}[1]
\REQUIRE An $[n, N]$ systematic code $\Cc$ that satisfies the assumptions stated in Theorem~\ref{thm:no_6cycle}, and $k$-request pattern $\big\{(1, k_1), (2, k_2),\ldots, (n, k_n)\big\}$.
\STATE $j = 1, \lambda = 0$.
\WHILE{$\sum_{l = j}^{k}k_l  >  0$}
\IF{$k_j > 0$~and~the code symbol $x_j$ is not used to serve read requests for $x_1, x_2, \ldots, x_{j-1}$}
\STATE Serve one of the $k_j$ read requests for $x_j$ using the bucket storing the message symbol $x_j$. 
\STATE $\lambda = 1$. \LCOMMENT{{\it \small $\lambda = 1$ indicates that $1$ request for $x_j$ has been served.}}
\ENDIF 
\STATE Select $k_j - \lambda$ disjoint repair groups for $x_j$ that do not contain any code symbol used to serve the read requests for $x_1, x_2,\ldots, x_{j-1}$.
\STATE Map $k_j - \lambda$ (remaining) read requests for $x_j$ to the selected $k_j - \lambda$ disjoint repair groups.
\STATE $j = j + 1$.
\STATE $\lambda = 0$.
\ENDWHILE
\end{algorithmic}
\algrule[1pt]
\caption{Description of the algorithm to serve $k$ read requests by using disjoint sets of code symbols (buckets).}
\label{algo:batch}
\end{figure}

Towards this, we assume that for a $j  >  j^{\ast}$, $k_j > 0$ and the algorithm is able to map read requests $$\big\{(1, k_1), (2, k_2), \ldots, (j -1, k_{j-1})\big\}$$ to disjoint sets of code symbols. Let $\Sc_1, \Sc_2,\ldots, \Sc_{\sum_{u = 1}^{j-1}k_u}$ denote the disjoint sets of code symbols used by the algorithm to serve these $\sum_{u = 1}^{j-1}k_u$ read requests. Note that  the set $\Sc_{v}$, for $v \in [\sum_{u = 1}^{j-1}k_u]$, corresponds to a message symbol in $\{x_1, x_2,\ldots, x_{j-1}\}$ or a repair group of a message symbol in $\{x_1, x_2,\ldots, x_{j-1}\}$. It follows from Lemma~\ref{lem:no_6cycle} that the message symbol $x_j$ has at least $k$ disjoint repair groups. Moreover,  each of the sets $\Sc_1, \Sc_2,\ldots, \Sc_{\sum_{u = 1}^{j-1}k_u}$ has common code symbols with at most one repair group for the message symbol $x_j$. Therefore, the algorithm can find $k_j \leq (k -  \sum_{u = 1}^{j-1}k_u)$ repair groups for the message symbol $x_j$ which consist of code symbols that do not appear in the sets $\Sc_1, \Sc_2,\ldots, \Sc_{\sum_{u = 1}^{j-1}k_u}.$
\end{proof}

\section{New Constructions of Multiset Batch Codes}
\label{sec:constructions}

We now utilize Theorem~\ref{thm:no_6cycle} to obtain new constructions of $(n, N, k, m)$ multiset batch codes with rate  $1 - o_k(1)$\footnote{Here, $b = o_k(a)$ implies that $\frac{b}{a} \to 0$ as $k \to \infty$}. Note that the requirement (i) in Theorem~\ref{thm:no_6cycle} translates to the graph being dense, while the requirement (ii) demands that the graph be not too desnse. Here, we also point out that for $\Cc$ to be an $(n, N, k, m = N)$ primitive multiset batch code, it is necessary for each node in $\Vc_1$ to have degree at least $k$ in the original graph $\Gc_{\Cc}$. In this section we take known explicit constructions of {\em dense} bipartite graphs with {\em girth at least $8$} and study the parameters of multiset  batch codes obtained from these graphs. These graphs strike a balance between the two counteracting requirements in Theorem~\ref{thm:no_6cycle}.

\subsection{Constructions based on the Balbuena graphs~\cite{balbuena2009}}
\label{sec:Balbuena_const}

Balbuena \cite{balbuena2009} proposed a method based on Latin Squares to construct the adjacency matrices of regular bipartite graphs of girth $8$. The construction of these graphs involves the concept of quasi row-disjoint matrices from {\cite{balbuena2008}}. The main result in \cite{balbuena2009} can be summarized as follows:

\begin{proposition}(\cite{balbuena2009})
\label{thm:balbuena}
Let $q$ be a prime power and $w$ be an integer such that $3\leq w \leq q$. Then, there exists an explicit $w$-regular balanced bipartite graph of girth $8$ on $2(wq^2 - q )$ nodes.
\end{proposition}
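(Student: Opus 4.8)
The plan is to exhibit an explicit incidence structure coordinatized over $\F_q$ and to check that its incidence (Levi) graph is balanced bipartite on $2(wq^2-q)$ vertices, is $w$-regular, and has girth exactly $8$; since a bipartite graph has no odd cycles, the last point amounts to forbidding $4$-cycles and $6$-cycles and exhibiting a single $8$-cycle. Following Balbuena, I would take both colour classes to be copies of the same ``point set'' $P\subseteq [w]\times\F_q\times\F_q$ of size $wq^2-q$, where $P$ is obtained from the full grid $[w]\times\F_q\times\F_q$ by deleting a carefully chosen set of $q$ triples --- this deletion is exactly what brings the vertex count down to $wq^2-q$, and it is forced by the need to turn the family of $q\times q$ defining matrices of the construction into \emph{quasi row-disjoint} matrices in the sense of \cite{balbuena2008}. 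Adjacency of a left point in block $a$ and a right point in block $b$ is then declared by a linear relation on their $\F_q^2$-coordinates whose coefficients come from a system of $w$ Latin squares of order $q$ (equivalently, from $w$ pairwise quasi row-disjoint matrices). The incidence rule is arranged so that every vertex has degree exactly $w$, and counting edges two ways re-confirms the vertex count; this part is bookkeeping once the rule is pinned down.

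For the girth, a $4$-cycle is a pair of left points sharing two common right neighbours; translated through the coordinates this becomes a linear system over $\F_q$ that, by the Latin-square property (each symbol appears once per row and once per column of each defining matrix), has at most one solution, so no $4$-cycle exists. A $6$-cycle yields three left points and three right points incident in a cyclic pattern; I would write the three incidence relations as a chain of $\F_q$-equations and use the stronger quasi row-disjointness of the matrix family to force every solution to be degenerate (two of the six vertices must coincide), which is impossible for a genuine $6$-cycle. Finally, an $8$-cycle can be written down explicitly using three distinct blocks --- available since $w\ge 3$ --- so the girth is exactly $8$.

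I expect the $6$-cycle-free verification to be the crux of the proof: it is precisely what dictates the specific algebraic ingredients --- the quasi row-disjointness condition and the deletion of the $q$ ``bad'' points --- and it is where a nontrivial case analysis over $\F_q$ must be carried out. By contrast, the regularity, the vertex count, the absence of $4$-cycles, and the existence of an $8$-cycle are all comparatively routine once the Latin-square incidence rule is in place.
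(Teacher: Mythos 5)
The paper does not prove Proposition~\ref{thm:balbuena} at all: it is imported verbatim from Balbuena~\cite{balbuena2009}, and the surrounding text merely names the two ingredients of that construction (Latin squares of order $q$ and quasi row-disjoint matrices from \cite{balbuena2008}). So there is no ``paper's own proof'' to compare against. Your sketch is consistent with those named ingredients and with the stated parameters (a balanced bipartite graph on $2(wq^2-q)$ vertices, $w$-regular, girth exactly $8$), and you correctly identify that in a bipartite graph the girth-$8$ claim reduces to excluding $4$- and $6$-cycles plus exhibiting one $8$-cycle, and that $w\geq 3$ is what makes the $8$-cycle available. You also correctly flag the $6$-cycle elimination via quasi row-disjointness as the step that actually does the work and forces the deletion of $q$ points.

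That said, what you have written is an outline, not a proof. The adjacency rule is never pinned down, the set of $q$ deleted triples is not specified, and the $6$-cycle case analysis --- which you yourself call the crux --- is deferred entirely (``I would write the three incidence relations as a chain of $\F_q$-equations and use \dots quasi row-disjointness to force every solution to be degenerate''). Whether that forcing actually succeeds depends on the precise algebraic definition of quasi row-disjointness and on how the deleted points interact with the incidence rule, and none of that is checked here. Since the paper itself treats the proposition as a citation, this is not a defect relative to the paper, but as a standalone proof attempt the gap is in exactly the place you predicted: the nontrivial algebra over $\F_q$ is asserted rather than carried out.
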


The bipartite graphs from \cite{balbuena2009} along with Theorem~\ref{thm:no_6cycle} yield  multiset batch codes with rate $1/2$ and the following parameters.

\begin{theorem}
For any prime power $k$, and a large enough integer $n$, there is an explicit $(n,N,k,m)$ Batch code with rate $\frac{1}{2}$, $m = 2k^3 - 2k$, and 
\[
N =  \ceilb{\frac{n}{k^3-k}} m
\]
\end{theorem}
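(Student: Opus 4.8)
The plan is to combine Proposition~\ref{thm:balbuena} with Theorem~\ref{thm:no_6cycle} via the gadget lemma (Lemma~\ref{lem:gadget}) to amplify the rate from the native $1/2$ of a balanced bipartite graph to the claimed $1/2$ with the stated server count, while handling the mismatch between $n$ and the graph's natural parameters. First I would set $q = k$ (so $k$ must be a prime power, as in the statement) and $w = k$, which is legal since $3 \le w \le q$ becomes $3 \le k \le k$. Proposition~\ref{thm:balbuena} then gives an explicit $k$-regular balanced bipartite graph $H$ on $2(k q^2 - q) = 2(k^3 - k)$ vertices, that is, with $k^3 - k$ vertices on each side and girth $8$.

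Next I would reinterpret $H$ as the graph $\Gc_{\Cc_0}$ of a systematic code $\Cc_0$: take the left vertices to be $n_0 := k^3 - k$ message symbols, the right vertices to be $n_0$ parity symbols (each parity being the sum, over $\Sigma$, of its neighbors among the message symbols), so that $N_0 = 2(k^3-k)$. Since $H$ is $k$-regular (in particular every left node has degree exactly $k$) and has girth $8$, Theorem~\ref{thm:no_6cycle} applies with $\widetilde{\Gc}_{\Cc_0} = \Gc_{\Cc_0}$ itself, yielding that $\Cc_0$ is an $(n_0, N_0, k, m_0 = N_0) = (k^3-k,\ 2(k^3-k),\ k,\ 2(k^3-k))$ primitive multiset batch code of rate $n_0/N_0 = 1/2$. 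This $\Cc_0$ is the building block; note $m_0 = 2k^3 - 2k$, matching the target $m$.

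Then I would apply the gadget lemma with $g = \ceilb{n/(k^3-k)}$: Lemma~\ref{lem:gadget} converts $\Cc_0$ into a $(g n_0,\ g N_0,\ k,\ m_0)$ multiset batch code $g \cdot \Cc_0$, which keeps the number of buckets fixed at $m = m_0 = 2k^3 - 2k$, keeps $k$ the same, and has length $g N_0 = \ceilb{\frac{n}{k^3-k}}\, m$ — exactly the claimed value of $N$ — with rate still $g n_0 / (g N_0) = 1/2$. The only loose end is that $g n_0 = \ceilb{n/(k^3-k)}(k^3-k) \ge n$ may strictly exceed $n$; since the statement says "a large enough integer $n$" and only claims rate $1/2$ and the formula for $N$, I would simply take $n$ to be a multiple of $k^3 - k$ (or, more generally, observe that a code on $g n_0 \ge n$ message symbols trivially serves $n$ of them by ignoring the rest, at no cost to $m$, $k$, or the rate bound), which is exactly the phrasing used in the analogous Lemma~\ref{lem:gadget} proof.

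The main obstacle — really the only nontrivial point — is verifying that the graph from Proposition~\ref{thm:balbuena} plugs cleanly into Theorem~\ref{thm:no_6cycle}: one must confirm that a $k$-regular bipartite graph of girth $8$ literally meets both hypotheses (left degree $\ge k$: yes, it is exactly $k$; girth $\ge 8$: yes, it is exactly $8$), and that "balanced" with $2(k^3-k)$ total vertices indeed means $k^3-k$ per side so that $N_0 = 2n_0$ and the rate is $1/2$. Everything else is bookkeeping: choosing $q = w = k$, tracking the parameters through the gadget lemma, and noting that the server count $m$ is invariant under the gadget operation, which is what makes the rate-amplification work without blowing up $m$.
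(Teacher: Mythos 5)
Your proposal is correct and follows essentially the same route as the paper: specialize Proposition~\ref{thm:balbuena} with $w=q=k$ to get a $k$-regular bipartite graph of girth $8$ on $2(k^3-k)$ vertices, feed it into Theorem~\ref{thm:no_6cycle} to obtain a $(k^3-k,\,2(k^3-k),\,k,\,2(k^3-k))$ primitive multiset batch code, then apply the gadget lemma with $g=\lceil n/(k^3-k)\rceil$. The paper's proof is just a terser version of your argument (it leaves the invocation of Theorem~\ref{thm:no_6cycle} and the bookkeeping implicit), and the small wrinkles you flag --- the constraint $3\le w\le q$ forcing $k\ge 3$, and the rate being exactly $1/2$ only when $k^3-k$ divides $n$ --- are genuine imprecisions in the theorem statement itself rather than gaps in your reasoning.
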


\begin{proof}~
 Apply Proposition \ref{thm:balbuena} and set $w = q = k$ to obtain an $(n_1 = k^3 - k, N_1 = 2(k^3 - k), k, m_1 = N_1)$ primitive multiset batch code. Now, we can complete the proof by applying the gadget lemma.
\end{proof}

As we have seen, using the graphs from \cite{balbuena2009} directly,
we obtain multiset batch codes with rate $1/2$.
We can modify the graph  construction of \cite{balbuena2009}, and obtain
multiset batch codes with rate $1 - o_k(1)$ as follows.

\begin{theorem}\label{thm:balbuena2}
For any $\epsilon > 0$, any large enough  prime power $q$, and  large enough 
integer $n$, there is an explicit $(n,N,k,m)$ multiset batch code with rate 
at least $1 - \frac{1}{k^{\epsilon /4}}$, $m = k^{3+ \epsilon}
+ o(k^{3+ \epsilon})$, and 
\[
N =  \ceilb{\frac{n}{k^{3 + \epsilon}}} m
\]
where $k = \floorb{\frac{q}{b}}$, $b= \ceilb{(q^3 - q)^{\beta}}$ and
$\beta = \frac{1}{3} \frac{\epsilon}{4 + \epsilon}$.
\end{theorem}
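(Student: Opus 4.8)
The plan is to modify the Balbuena construction of Proposition~\ref{thm:balbuena} so that the resulting bipartite graph, while no longer balanced, has many more left nodes than right nodes, and then amplify via the gadget lemma. Recall that Proposition~\ref{thm:balbuena} gives, for a prime power $q$ and $3 \le w \le q$, a $w$-regular bipartite graph on $wq^2 - q$ left nodes and $wq^2 - q$ right nodes with girth $8$. The idea is to take $w = q$, producing a $q$-regular bipartite graph $H$ on $q^3 - q$ left nodes and $q^3 - q$ right nodes, and then \emph{merge} the left nodes into groups of size $b = \ceilb{(q^3-q)^{\beta}}$ with $\beta = \frac{1}{3}\frac{\epsilon}{4+\epsilon}$. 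That is, partition the $q^3-q$ left nodes into $\lceil (q^3-q)/b \rceil$ blocks, and for each block create a single new left node whose neighborhood is the union of the neighborhoods of the $b$ old left nodes in that block. The new bipartite graph $\widehat{H}$ has $n_1 := \lceil (q^3-q)/b\rceil$ left nodes, $q^3 - q$ right nodes, and each new left node has degree at least $b(q - (b-1)) \ge k := \floorb{q/b}$ provided $b$ is chosen so that $q - (b-1) \ge 1$, i.e. roughly each merged node has degree $\approx bq \ge k$ (one should be slightly careful: merging can create multi-edges or collapse a short cycle, so one instead keeps only a simple-graph subgraph and argues the degree is still at least $k$).

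The crucial point is that $\widehat{H}$ still has girth at least $8$: I would argue that any cycle of length $< 8$ in $\widehat{H}$ would, after "un-merging" (choosing for each visited merged left node an appropriate original left node), lift to a closed walk of the same length in $H$, and a short closed walk in a bipartite graph of girth $8$ must be degenerate, which contradicts the structure of the merging (distinct right-node neighbors came from the same or different original left nodes, but in all cases one recovers a genuine short cycle or a repeated right node, impossible for girth $8$). This is the step I expect to be the main obstacle — one must check that merging left nodes cannot create a $4$- or $6$-cycle, handling the cases where two merged endpoints of a would-be short cycle came from the same original block versus different blocks, and ensuring the simple-graph reduction does not kill too many edges at any merged node. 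Once girth $\ge 8$ and left-degree $\ge k$ are established, Theorem~\ref{thm:no_6cycle} immediately gives a primitive multiset batch code with parameters $(n_1, N_1, k, m_1 = N_1)$ where $N_1 = n_1 + (q^3 - q)$.

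It then remains to compute the rate and the number of servers. The rate of the primitive code is $n_1 / N_1 = n_1/(n_1 + q^3 - q)$, and since $n_1 = \lceil (q^3-q)/b \rceil$ with $b = \ceilb{(q^3-q)^{\beta}}$, we get $N_1/n_1 = 1 + b + o(b) \approx (q^3-q)^{\beta}$, so $m_1 = N_1 \approx (q^3 - q)^{1+\beta}$. Expressing everything in terms of $k = \floorb{q/b} \approx q/(q^3-q)^{\beta} = q^{1-3\beta+o(1)}$, a direct computation with $\beta = \frac{1}{3}\frac{\epsilon}{4+\epsilon}$ gives $q^3 - q = k^{3/(1-3\beta)+o(1)} = k^{3 + \epsilon + o(1)}$ and hence $m_1 = (q^3-q)^{1+\beta} = k^{(3+\epsilon)(1+\beta) + o(1)}$; one checks that $(3+\epsilon)(1+\beta)$, with this choice of $\beta$, works out so that after amplification one can take $m = k^{3+\epsilon} + o(k^{3+\epsilon})$ servers — the exponent bookkeeping is routine algebra that I would carry out explicitly. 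The rate bound $1 - \frac{1}{k^{\epsilon/4}}$ comes from $1 - n_1/N_1 = (q^3-q)/N_1 \approx (q^3-q)^{-\beta} = k^{-\beta \cdot 3/(1-3\beta) + o(1)}$, and $\beta \cdot \frac{3}{1-3\beta} = \frac{\epsilon}{4} \cdot \frac{1}{1+\epsilon/4}\cdot\frac{4+\epsilon}{4} $... in any case this simplifies to $\ge 1 - k^{-\epsilon/4}$ for large enough $q$. Finally, apply the gadget lemma (Lemma~\ref{lem:gadget}) with $g = \ceilb{n/n_1}$ to pass from the primitive code on $n_1$ message symbols to an $(n, N, k, m)$ multiset batch code with $N = \ceilb{n/n_1} \cdot m$ and the same rate and the same $m = m_1$, which is exactly the claimed statement (with $n_1 = k^{3+\epsilon}$ up to lower-order terms matching the $\ceilb{n/k^{3+\epsilon}}$ in the theorem).
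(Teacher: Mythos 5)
Your construction goes in the wrong direction and the arithmetic does not hold up. You propose to \emph{merge} $b$ left nodes of the Balbuena graph $H$ (with $q^3-q$ nodes on each side) into a single left node, leaving the right side fixed. This gives $n_1 \approx (q^3-q)/b$ left nodes and $q^3-q$ right nodes, so $N_1 = n_1 + (q^3-q) \approx (q^3-q)(1 + 1/b)$ and the rate is $n_1/N_1 \approx 1/(b+1)$, which tends to $0$ — the opposite of what is wanted. Your claim that $1 - n_1/N_1 = (q^3-q)/N_1 \approx (q^3-q)^{-\beta}$ is simply miscomputed: in fact $(q^3-q)/N_1 \approx b/(b+1) \to 1$. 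The number of servers is also off: $m_1 = N_1 \approx q^3-q$, not $(q^3-q)^{1+\beta}$. The paper does the reverse operation: it \emph{splits} each left node into $b$ copies and distributes that node's $q$ edges among the copies, so that $|\Vc_1'| = b(q^3-q)$, $|\Vc_2'| = q^3-q$, each left node has degree $k = q/b$, and the rate is $b(q^3-q)/((b+1)(q^3-q)) = b/(b+1) \to 1$, with $m = (b+1)(q^3-q)$. Splitting makes the left side grow relative to the right side (boosting the rate), whereas merging shrinks it.

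The girth step, which you flag as the main obstacle, in fact fails for merging. Suppose left nodes $u_1,u_2$ are merged to $w$ with $u_1 \sim r_1$, $u_2 \sim r_2$, and left nodes $v_1,v_2$ are merged to $w'$ with $v_1 \sim r_1$, $v_2 \sim r_2$. Then $w - r_1 - w' - r_2 - w$ is a genuine $4$-cycle in the merged simple graph, but in $H$ the corresponding edges form two vertex-disjoint paths and no cycle at all. So merging can create short cycles out of nothing, and the "lift a short cycle in $\widehat{H}$ to one in $H$" argument breaks: the lift need not be a closed walk. Splitting does not have this problem, because every edge of $\Gc'$ projects to an edge of $\Gc$, so any cycle in $\Gc'$ projects to a closed walk of the same length in $\Gc$; the paper then uses girth $\ge 8$ and simplicity of $\Gc$ to rule out the degenerate cases where the projection repeats a vertex (which would yield parallel edges in $\Gc$).
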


\begin{proof}~
We start with the 4 and 6-cycle free regular 
bipatite graph $\Gc=(\Vc_1,\Vc_2,\Ec)$ obtained by 
 applying Proposition \ref{thm:balbuena} and setting  $w = q = k$.
Thus, in the graph $\Gc$, $|\Vc_1|=|\Vc_2|=q^3 - q$. We use $r$ to denote 
this number,
 that is $r=q^3-q$.
Recall that the degree of every vertex in $\Gc$ is $q$.

Next, we modify this construction, to obtain a new graph $\Gc'=(\Vc_1',\Vc_2',E')$
as follows.
Let $\beta = \frac{1}{3} \frac{\epsilon}{4 + \epsilon}$, 
and $b= \ceilb{(q^3 - q)^{\beta}}$.
To keep the presentation simple, from now on we assume that $b$ divides $q$.

In the new graph $\Gc'$, the number 
of vertices on the right side remains the same as in $\Gc$,
but we take $b$  ``copies'' of the left hand side vertices.
We distribute the edges adjacent to the left nodes of the graph
between the ``copies'', thus 
the  degree of every vertex in $\Vc_1'$ in $\Gc'$ will be 
$k = \frac{q}{b}$, and the degree of every vertex in $\Vc_2'$  
in $\Gc'$ remains $q$.
Note that $\beta < 1/3$ holds for any $\epsilon > 0$, thus $k>1$ holds
for large enough $q$.

More formally, let $|\Vc'_1|=b|\Vc_1|= br$, $\Vc'_2 = \Vc_2$, and $|\Ec'|=|\Ec|$. 
Thus each node in $\Vc_1$  corresponds to $b$ nodes in $\Vc'_1$
and we think of these as copies of the original node in $\Vc_1$.  
Let $\Vc'_1 = \{1,2, \cdots, br\}$. Then for $i'=1,2 \cdots, br$, 
we have $i'= ur +i$ for $u=0,1,\cdots, b-1$ and $i=1,2,\cdots, r$. 
Thus, for a given node $i\in \Vc_1$, we have the copies 
$i, r+i, 2r+i, \cdots, (b-1)r+i$ in $\Vc'_1$. 

We define the edges of $\Gc'$ as follows. 
Suppose that the neighbors of node $i\in \Vc_1$ in $\Gc$ are 
$\Nc(i)= \{j_1, j_2, \cdots, j_q\}$. 
Then the neighbors of node $ur+i \in \Vc'_1$ in $\Gc'$ are 
$N'(ur+i)=\{j_{u\frac{q}{b}+1},j_{u\frac{q}{b}+2}, 
\cdots,j_{u\frac{q}{b}+\frac{q}{b}}\}$. 
That is, each copy of a given node $i$ keeps a $\frac{1}{b}$ faction of its 
original neighbors (and each vertex in $\Vc_1'$ has $k=q/b$ neighbors).

Next we prove that $\Gc'$ has no 4-cycles or 6-cycles.

Assume that $\Gc'$ has  a 4-cycle, without loss of generality, 
suppose ${i'_1}$, ${i'_2}$, ${j_1}$, ${j_2}$ form a 4-cycle.
Notice that if 
\begin{equation}\label{eq:proof_no_4_cycle}
i'_1  \equiv i'_2 \pmod  r
\end{equation}
then in graph $\Gc$, there are two edges between the nodes ${i_1}$ and ${j_1}$, 
where $i_1 \equiv i'_1 \pmod r$,  $ 1\leq i_1\leq r  $.  
This is a contradiction, since $\Gc$ does not have parallel edges. 
Otherwise,  if (\ref{eq:proof_no_4_cycle}) does not hold, 
then there exists a 4-cycle (formed by ${i_1}, {i_2},{j_1},{j_2}$) 
in the graph $\Gc$, where $i_1 \equiv i'_1 \pmod r$, $i_2\equiv i'_2 \pmod r$,  
and $ 1\leq i_1,i_2\leq r  $. 
This is a contradiction, since $\Gc$ does not have 4-cycles.

Next, assume that $\Gc'$ has  a 6-cycle, without loss of generality, 
suppose ${i'_1}$, ${i'_2}$, ${i'_3}$, ${j_1}$, ${j_2}$, ${j_3}$ 
form a 6-cycle. Let us consider the following congruences:
\begin{equation}\label{eq:proof_no_6_cycle}
\begin{split}
i'_1 & \equiv i'_2 \pmod r\\
i'_2 & \equiv i'_3 \pmod r\\
i'_3 & \equiv i'_1 \pmod r\\
\end{split}
\end{equation}
If  none of the congruences (\ref{eq:proof_no_6_cycle}) holds, 
then there exists a 6-cycle in the graph $\Gc$. 
That 6-cycle is formed by 
${i_1}$, ${i_2}$, ${i_3}$, ${j_1}$, ${j_2}$, ${j_3}$, 
where $i_t \equiv i'_t \pmod  r $ and $1 \leq i_t \leq r$ for
$t=1,2,3$.
This is a contradiction, since $\Gc$ does not have 6-cycles. 
Otherwise, supose that  at least one of the congruences 
(\ref{eq:proof_no_6_cycle}) holds. 
Without loss of generality, assume that $i'_1  \equiv i'_2 \pmod  r$. 
Since ${i'_1}$, ${i'_2}$, ${i'_3}$, ${j_1}$, ${j_2}$, ${j_3}$ 
form a 6-cycle, one of $j_1,j_2,j_3$, say $j_1$ is connected to both 
$i'_1$ and $i'_2$. Then there are two edges between ${i_1}$ and ${j_1}$ in 
the graph $\Gc$, where $i_1 \equiv i'_1 \pmod r$ and $ 1\leq i_1 \leq r  $. 
This is a contradiction, since $\Gc$ does not have parallel edges.

Thus, we have shown that $\Gc'$ does not have 4-cycles or 6-cycles.
Then, by Theorem~\ref{thm:no_6cycle} we obtain a primitive multiset
batch code with number of servers $m=|\Vc_1'| + |\Vc_2'|= (b+1)r$, 
$k=q/b$, and rate 
$\frac{b}{b+1}= 1 - \frac{1}{b+1}> 1 - \frac{1}{b}$.
Substituting $b= \ceilb{(q^3 - q)^{\beta}}$ where
$\beta = \frac{1}{3} \frac{\epsilon}{4 + \epsilon}$, and expressing
both $m$ and the rate as a function of $k$ 
we get 
$$m = k^{3 + \frac{12 \beta}{1 - 3 \beta}} + k^{3 + 
\frac{9 \beta}{1 - 3 \beta}} 
= k^{3+ \epsilon} + o(k^{3+ \epsilon})$$ 
and  rate 
at least $$ 1 - \frac{1}{k^{\frac{3 \beta}{1 - 3 \beta}}}
=1 - \frac{1}{k^{\epsilon /4}}.$$
Applying the Gadget Lemma with $g = \ceilb{\frac{n}{k^{3 + \epsilon}}}$ we obtain an $(n, N, k, m)$ multiset batch code with
\[
N =  \ceilb{\frac{n}{k^{3 + \epsilon}}} m 
\]
and rate at least $1 - \frac{1}{k^{\epsilon /4}}$.

\end{proof}

\begin{theorem} \label{thm:balbuena3}
For any integer  $c \geq 2$, 
any large enough  prime power $q$, and  large enough 
integer $n$, there is an explicit $(n,N,k,m)$ multiset batch code with rate 
$1 - \frac{1}{c}$, $m \leq c (c-1)^3 k^3 = O(k^3) $ and
\[
N \leq  \ceilb{\frac{n}{(c-1)(q^3 - q)}} m \leq
\ceilb{\frac{1.1n}{(c-1)^4 k^{3}}} m = O(\frac{n}{k^3} m)
\]
where $k = \floorb{\frac{q}{b}}$ and $b= c-1$.
\end{theorem}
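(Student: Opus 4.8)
The plan is to reuse the proof of Theorem~\ref{thm:balbuena2} almost verbatim, specializing the number of ``copies'' of the left vertices from the slowly growing quantity $b=\ceilb{(q^3-q)^{\beta}}$ to the constant $b=c-1$. The whole point of that earlier argument is that the cycle-elimination step does not care how large $b$ is, so it transfers for free and only the parameter bookkeeping changes.

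Concretely, I would first apply Proposition~\ref{thm:balbuena} with $w=q$ to get a $q$-regular balanced bipartite graph $\Gc=(\Vc_1,\Vc_2,\Ec)$ of girth $8$ with $|\Vc_1|=|\Vc_2|=q^3-q=:r$ and every vertex of degree $q$. Then I would build $\Gc'=(\Vc_1',\Vc_2',\Ec')$ exactly as in the proof of Theorem~\ref{thm:balbuena2}: take $b=c-1$ copies $i,\,r+i,\,\dots,\,(b-1)r+i$ of each left node $i\in\Vc_1$, keep $\Vc_2'=\Vc_2$, and split the $q$ edges incident to $i$ evenly among its $b$ copies. Following the convention of that proof I would assume $b\mid q$ (the general case only discards a bounded number of surplus edges and replaces $q/b$ by $k=\floorb{q/b}$), so in $\Gc'$ every left node has degree $k=q/b$, every right node keeps degree $q$, and $|\Vc_1'|=br=(c-1)(q^3-q)$, $|\Vc_2'|=r=q^3-q$. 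The key step---that $\Gc'$ contains no $4$-cycle and no $6$-cycle---is already carried out in the proof of Theorem~\ref{thm:balbuena2}: it uses only that $\Gc$ has no parallel edges, no $4$-cycle and no $6$-cycle, together with the fact that every copy of a node of $\Vc_1$ keeps a subset of that node's original neighbourhood, and nowhere does it use the value of $b$. Hence $\Gc'$ has girth at least $8$, and Theorem~\ref{thm:no_6cycle} applied to the systematic code $\Cc$ whose graph representation is $\Gc'$ yields a primitive $(n_1,N_1,k,m_1)$ multiset batch code with $n_1=br$, $N_1=m_1=|\Vc_1'|+|\Vc_2'|=(b+1)r=c(q^3-q)$ and rate $n_1/N_1=b/(b+1)=1-\tfrac1c$.

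To remove the restriction $m=N$ I would invoke the Gadget Lemma (Lemma~\ref{lem:gadget}) with $g=\ceilb{n/n_1}=\ceilb{\frac{n}{(c-1)(q^3-q)}}$, obtaining an $(n,N,k,m)$ multiset batch code with $m=m_1$, rate $1-\tfrac1c$, and $N=gN_1=\ceilb{\frac{n}{(c-1)(q^3-q)}}\,m$. The remaining estimates are routine and use only that $c$ is a constant: from $k=q/b$ (equivalently $q=(c-1)k$) we get $m=c(q^3-q)\le cq^3=c(c-1)^3k^3=O(k^3)$; and for $q\ge 4$ one has $q^3-q\ge q^3/1.1$, so $\frac{n}{(c-1)(q^3-q)}\le\frac{1.1\,n}{(c-1)q^3}=\frac{1.1\,n}{(c-1)^4k^3}$, whence, taking ceilings, $N\le\ceilb{\frac{1.1n}{(c-1)^4k^3}}\,m=O\!\big(\frac{n}{k^3}m\big)$.

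I do not expect a genuine obstacle here: the one nontrivial ingredient---that splitting the edges of a girth-$8$ bipartite graph among copies of its vertices cannot create a $4$- or $6$-cycle---has already been proved for Theorem~\ref{thm:balbuena2}, and the only care needed is to track the divisibility assumption $b\mid q$ (otherwise absorbed into the floor in $k=\floorb{q/b}$) and the harmless factor $1.1$ slack in the bound on $N$.
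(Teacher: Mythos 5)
Your proposal is correct and follows exactly the same route as the paper: the paper's entire proof of Theorem~\ref{thm:balbuena3} is the one-liner ``Substitute $k = \floorb{q/b}$ and $b = c-1$ in the above construction,'' i.e.\ rerun the proof of Theorem~\ref{thm:balbuena2} with the constant $b=c-1$ in place of the growing $b=\ceilb{(q^3-q)^\beta}$, apply Theorem~\ref{thm:no_6cycle} and the Gadget Lemma, and read off the parameters. Your parameter bookkeeping ($n_1=(c-1)r$, $N_1=m=c(q^3-q)\leq c(c-1)^3k^3$, rate $1-1/c$, and the $q^3-q\geq q^3/1.1$ estimate for $q\geq 4$) is the correct unpacking of that one-liner, and your observation that the 4-/6-cycle-elimination argument in Theorem~\ref{thm:balbuena2} never uses the magnitude of $b$ is precisely why the substitution is valid.
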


\begin{proof}~
Substitute $k = \floorb{\frac{q}{b}}$ and $b= c-1$ in the above construction.
\end{proof}

\subsection{Constructions based on the Lazebnik, Ustimenko, Woldar graphs~\cite{lazebnik}}
\label{sec:Lazebnik_const}

Here, we present the construction of  bipartite graphs with girth at least $8$ from~\cite{lazebnik}. These graphs allow for a left degree $k$  which is polynomial in the number of left nodes  $n$ while achieving the rate $1 - o_{k}(1)$. First, we briefly describe the construction of these bipartite graphs and then comment on the parameters of the multiset batch codes obtained based on these graphs.

\textbf{Lazebnik {\em et al.} construction :} 
Given an odd prime power $q$, let $t \in (0, 2]$ and $s \in (0, 1]$ be such that $q^t$ and $q^s$ are integers. The bipartite graph $\Bc_{s, t}(q) = \Gc = (\Vc_1, \Vc_2, \Ec)$ with $q^{3 + t}$ left nodes and $q^{3 + s}$ right nodes is constructed as follows. Let $\Tc \subseteq \F_{q^2}$ and $\Sc \subseteq \F_q$ with $|\Tc| = q^t$ and $|\Sc| = q^{s}$. Associate $q^{3 + t}$ left nodes with $3$-dimensional vectors such that 
$$
\Vc_{1} = \{\lv = (l_1, l_2, l_3)~:~l_1 \in \Tc \subseteq \F_{q^2}, l_2 \in \F_{q^2}, l_3 \in \F_{q}\}.
$$
The $q^{3 + s}$ right nodes of the bipartite graph $\Gc$ are denoted by $3$-dimensional vectors such that 
$$
\Vc_{2} = \{\vv = (v_1, v_2, v_3)~:~v_1 \in \Sc \subseteq \F_q, v_2 \in \F_{q^2}, v_3 \in \F_{q}\}.
$$
In the graph $\Gc$, there exists an edge between the left node $\lv = (l_1, l_2, l_3)$ and the right node $\vv = (v_1, v_2, v_3)$, {\em i.e.}, $(\lv, \vv) \in \Ec$, iff we the following two conditions are satisfied
\begin{subequations}
\begin{align}
l_2 - v_2 &= l_1v_1, \\
l_3 - v_3 &= f(l_1)v_2 + l_1f(v_2).
\end{align}
\end{subequations}
Here, $f: x \mapsto f(x)$ represents the involutive automorphism of $\F_{q^2}$ with fixed field $\F_q$. Lazebnik {\em et al.} establish that the graph $\Bc_{s, t}(q)$ has girth at least $8$~\cite[Theorem 3]{lazebnik}. Here, we paraphrase the result to make it consistent with the rest of the paper. 

\begin{proposition}{\cite[Theorem 3]{lazebnik}}
\label{prop:lazebnik}
Let $q$ be an odd prime power. Then for any $t \in (0, 2]$ and $s \in (0, 1]$ such that both $q^t$ and $q^s$ are integers, the bipartite graph $\Bc_{s, t}(q)$ is biregular with $q^{3+s} + q^{3+t}$ total nodes ($q^{3+t}$ left nodes, and $q^{3+s}$ right nodes), $q^{3+s+t}$ edges and girth at least $8$. The left degree and the right degree of  $\Bc_{s, t}(q)$ are $q^s$ and $q^t$, respectively.
\end{proposition}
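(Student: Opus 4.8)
The statement packages three claims --- biregularity with the stated degrees, the edge count $q^{3+s+t}$, and girth at least $8$ --- and I would attack them in that order, since the first two fall out of the observation that the defining system can be solved one coordinate at a time, whereas the girth bound is the real content. Because $\Bc_{s,t}(q)$ is bipartite its girth is even, so it will be enough to forbid $4$-cycles and $6$-cycles. For biregularity I would fix a left node $\lv=(l_1,l_2,l_3)$: for each of the $|\Sc|=q^s$ admissible values of $v_1$, the first equation pins down $v_2=l_2-l_1 v_1\in\F_{q^2}$ and then the second pins down $v_3=l_3-(f(l_1)v_2+l_1 f(v_2))$, with $v_3\in\F_q$ because $f(l_1)v_2+l_1 f(v_2)$ is fixed by $f$ --- it equals $\mathrm{Tr}_{\F_{q^2}/\F_q}(l_1 f(v_2))$ --- and $l_3\in\F_q$. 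Hence every left node has degree exactly $q^s$; the symmetric computation (solve for $l_2$ then $l_3$ as $l_1$ ranges over $\Tc$) gives right degree exactly $q^t$, and counting edges from either side gives $q^{3+t}\cdot q^s=q^{3+s+t}=q^{3+s}\cdot q^t$.

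For the girth I would first rule out $4$-cycles. A $4$-cycle yields two distinct left nodes $\lv,\lv'$ with two common right neighbours. If $l_1=l_1'$, subtracting the two first equations forces $l_2=l_2'$ and subtracting the two second equations forces $l_3=l_3'$, a contradiction; so $l_1\ne l_1'$, and then any common neighbour $\vv$ is uniquely determined --- subtract the first equations of $\lv$ and $\lv'$ to get $v_1$, then use the first equation of $\lv$ to get $v_2$, then the second to get $v_3$ --- so there can be at most one common neighbour. The mirror argument handles two right nodes with two common left neighbours, and also shows that two left nodes sharing a right neighbour always have distinct first coordinates.

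The heart of the proof is ruling out $6$-cycles. Suppose $\lv^{(1)}\,\vv^{(1)}\,\lv^{(2)}\,\vv^{(2)}\,\lv^{(3)}\,\vv^{(3)}$ is one. By the $4$-cycle analysis the $l_1^{(i)}$ are pairwise distinct and each $\vv^{(j)}$ is an explicit function of the two left nodes it joins. Going around the cycle, the first equations telescope to
\[
(l_1^{(1)}-l_1^{(2)})v_1^{(1)}+(l_1^{(2)}-l_1^{(3)})v_1^{(2)}+(l_1^{(3)}-l_1^{(1)})v_1^{(3)}=0,
\]
and the second equations, after applying the identity $f(\alpha)y+\alpha f(y)=\mathrm{Tr}_{\F_{q^2}/\F_q}(\alpha f(y))$, give a companion ``Hermitian'' relation in the same data. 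The plan is to combine these two relations with the facts that the $v_1^{(j)}$ lie in $\F_q$ (since $\Sc\subseteq\F_q$) and that the $l_1^{(i)}$ are pairwise distinct, and to deduce --- via a short computation with the norm and trace forms of $\F_{q^2}/\F_q$ --- that they can hold simultaneously only when two of the $\lv^{(i)}$, equivalently two of the $\vv^{(j)}$, coincide, contradicting that we have a genuine $6$-cycle.

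The hard part will be exactly this last deduction: squeezing a contradiction out of the bilinear relation together with its Hermitian companion is where the specific, engineered form of the two defining equations is used --- they are designed so that the incidence structure with left nodes as points and right nodes as lines is essentially a classical generalized quadrangle, whose incidence graph has girth $8$. If the direct computation proves unpleasant, an alternative is to note that $\Bc_{s,t}(q)$ is the induced subgraph of $\Bc_{1,2}(q)$ obtained by keeping only left nodes with $l_1\in\Tc$ and right nodes with $v_1\in\Sc$, so $\mathrm{girth}(\Bc_{s,t}(q))\ge\mathrm{girth}(\Bc_{1,2}(q))$, and then to identify $\Bc_{1,2}(q)$ with an affine part of the point--line incidence graph of the Hermitian generalized quadrangle and quote the girth-$8$ property from the generalized-quadrangle references cited in the paper.
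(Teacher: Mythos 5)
The paper does not actually prove this proposition: it is stated as a verbatim citation of \cite[Theorem~3]{lazebnik}, so there is no ``paper's own proof'' to compare against. Judged on its own terms, your proposal correctly dispatches the routine parts. The computation of left and right degrees (solving for $v_2$, then $v_3$ for each admissible $v_1$, with $v_3\in\F_q$ because $f(l_1)v_2+l_1 f(v_2)=\mathrm{Tr}_{\F_{q^2}/\F_q}(l_1 f(v_2))$), the edge count $q^{3+s+t}$, and the $4$-cycle exclusion (two left nodes with equal first coordinates coincide; two left nodes with distinct first coordinates determine any common neighbour uniquely) are all correct and complete.

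The gap is exactly where you flag it: the $6$-cycle exclusion, which you yourself call ``the heart of the proof,'' is never actually carried out. You derive the telescoping relation $(l_1^{(1)}-l_1^{(2)})v_1^{(1)}+(l_1^{(2)}-l_1^{(3)})v_1^{(2)}+(l_1^{(3)}-l_1^{(1)})v_1^{(3)}=0$ and assert the existence of a ``companion Hermitian relation'' from the second equations, and then describe a \emph{plan} to combine them into a contradiction; but no such combination is exhibited, and it is not obvious that a ``short computation with the norm and trace forms'' succeeds --- this is precisely the nontrivial content of the Lazebnik--Ustimenko--Woldar theorem. Your fallback route has the same problem at one remove: the reduction to $\Bc_{1,2}(q)$ via subgraph monotonicity of girth is fine, but the identification of $\Bc_{1,2}(q)$ with ``an affine part of the point--line incidence graph of the Hermitian generalized quadrangle'' is itself a nontrivial claim that you do not justify, and it does not match how the result is actually obtained in \cite{lazebnik} --- the paper explicitly distinguishes the Lazebnik--Ustimenko--Woldar construction (via embeddings of Chevalley group geometries in Lie algebras) from the de~Caen--Sz\'ekely construction, which is the one actually built from generalized quadrangles. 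The degree and order parameters of $\Bc_{1,2}(q)$ also do not coincide with those of a classical $\mathrm{GQ}(s,t)$ (which carry $+1$'s), so the ``affine part'' identification would itself require an argument. In short: everything up to and including the $4$-cycle case is right, but the claim the proposition is really about --- no $6$-cycles --- is left unproven on both of your proposed routes.
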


\begin{remark}
For the case when $\Tc = \F_{q^2}$ and $q \geq 3$, Lazebnik {\em et al.} show that the graph $\Bc_s(q) = \Bc_{s, 2}(q)$ has girth exactly equal to $8$.
\end{remark}

We now present the main result of this subsection, which states the parameters of the multiset batch codes obtained from the bipartite graphs in \cite{lazebnik}.
\begin{theorem}
\label{thm:lazebnik}
For any $0 < \epsilon \leq 1$, one can find an infinite sequence of  odd prime powers $k$ such that for large enough integer $n$, there is an explicit $(n, N, k, m)$ multiset batch code with rate at least  $1 - \frac{1}{k^{\epsilon}}$, $m =  k^{4 + \tilde{\epsilon}} + o( k^{4 + \tilde{\epsilon}})$, and $$N = \ceilb{\frac{n}{k^{4 + \tilde{\epsilon}}}}m.$$ Here, $\tilde{\epsilon}$ is a rational number which can be taken arbitrarily close to $\epsilon$. 
\end{theorem}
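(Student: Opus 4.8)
The plan is to invoke Theorem~\ref{thm:no_6cycle} with the Lazebnik--Ustimenko--Woldar bipartite graph $\Bc_{s,t}(q)$ of Proposition~\ref{prop:lazebnik}, then pick the free parameters $t$ and $s$ so that the left degree $q^s$ plays the role of $k$, the right side is only a $o(1)$ fraction of the left side, and finally amplify the number of data symbols via the Gadget Lemma~\ref{lem:gadget}. Concretely, the graph $\Bc_{s,t}(q)$ has $q^{3+t}$ left nodes, $q^{3+s}$ right nodes, girth at least $8$, left degree $q^s$ and right degree $q^t$. We want the left degree to equal $k$, so we set $q^s = k$; to make the right side small relative to the left, we want $s < t$, i.e. the right side $q^{3+s}$ should be a vanishing fraction of the left side $q^{3+t}$. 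The primitive batch code we get from Theorem~\ref{thm:no_6cycle} then has $n_1 = q^{3+t}$ data symbols, $N_1 = q^{3+t} + q^{3+s}$ code symbols, $m_1 = N_1$ servers, and parameter $k = q^s$; its rate is $\frac{q^{3+t}}{q^{3+t}+q^{3+s}} = \frac{1}{1+q^{s-t}} \geq 1 - q^{s-t} = 1 - q^{-(t-s)}$.

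The arithmetic to carry out is the choice of $t$ and $s$ as functions of $\epsilon$. Since we need $q^t$ and $q^s$ to be integers with $t \in (0,2]$ and $s \in (0,1]$, and we need $k = q^s$ to be an odd prime power (so that the ambient field exists), the cleanest route is to take $q = k$ itself with $s = 1$, and choose $t$ so that the rate bound $1 - q^{-(t-s)} = 1 - k^{-(t-1)}$ is at least $1 - k^{-\epsilon}$, i.e. $t - 1 \geq \epsilon$, so $t = 1 + \epsilon$ (which lies in $(0,2]$ since $0 < \epsilon \le 1$). But $q^t = k^{1+\epsilon}$ must be an integer, which forces us to approximate $\epsilon$ by a rational $\tilde\epsilon$ with $k^{\tilde\epsilon}$ integral (e.g. restrict to odd prime powers $k$ that are perfect $D$-th powers, where $\tilde\epsilon$ has denominator $D$); this is exactly why the statement only claims an infinite sequence of primes $k$ and a rational $\tilde\epsilon$ arbitrarily close to $\epsilon$. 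With $s=1$, $t = 1+\tilde\epsilon$, $q = k$ we get $n_1 = k^{4+\tilde\epsilon}$, $N_1 = k^{4+\tilde\epsilon} + k^4 = k^{4+\tilde\epsilon}(1 + k^{-\tilde\epsilon}) = k^{4+\tilde\epsilon} + o(k^{4+\tilde\epsilon})$, so $m = m_1 = k^{4+\tilde\epsilon} + o(k^{4+\tilde\epsilon})$, and rate at least $1 - k^{-\tilde\epsilon}$, which we can ensure is at least $1 - k^{-\epsilon}$ by taking $\tilde\epsilon \le \epsilon$ (choosing $\tilde\epsilon$ slightly below $\epsilon$).

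Finally, apply the Gadget Lemma with $g = \ceilb{\frac{n}{k^{4+\tilde\epsilon}}}$ to the primitive multiset batch code $\Cc_1$ with parameters $(k^{4+\tilde\epsilon}, N_1, k, m_1)$: this produces an $(n', N, k, m)$ multiset batch code with $n' = g\, k^{4+\tilde\epsilon} \ge n$, $N = g N_1 = \ceilb{\frac{n}{k^{4+\tilde\epsilon}}} m$, the same $m = m_1 = k^{4+\tilde\epsilon} + o(k^{4+\tilde\epsilon})$, and rate unchanged at $N_1^{-1} n_1 = \frac{n_1}{N_1} \ge 1 - k^{-\epsilon}$ (the Gadget Lemma preserves rate since it replaces $(n_1, N_1)$ by $(g n_1, g N_1)$); restricting to the first $n$ of the $g k^{4+\tilde\epsilon} \ge n$ message symbols (padding the rest with zeros) gives the claimed code on exactly $n$ symbols without decreasing the rate. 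The only genuinely delicate point is the number-theoretic bookkeeping in the previous paragraph --- arranging that $k$ ranges over an infinite set of odd prime powers for which $q^t = k^{1+\tilde\epsilon}$ is an integer while $\tilde\epsilon$ stays within $\delta$ of $\epsilon$; everything else is a direct substitution into Proposition~\ref{prop:lazebnik}, Theorem~\ref{thm:no_6cycle}, and Lemma~\ref{lem:gadget}.
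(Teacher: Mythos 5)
Your approach is exactly the paper's: set $s=1$, $q=k$, $t = 1+\tilde\epsilon$ in the Lazebnik--Ustimenko--Woldar graph $\Bc_{s,t}(q)$, invoke Proposition~\ref{prop:lazebnik} and Theorem~\ref{thm:no_6cycle} to get a primitive multiset batch code with $n_1 = k^{4+\tilde\epsilon}$, $N_1 = k^{4+\tilde\epsilon}+k^4$, rate at least $1 - k^{-\tilde\epsilon}$, then amplify with the Gadget Lemma. The number-theoretic bookkeeping (restricting $k$ to an infinite sequence of odd prime powers for which $k^{\tilde\epsilon}$ is an integer) is also handled the same way.

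There is one genuine sign error: you say you can ensure the rate is at least $1 - k^{-\epsilon}$ ``by taking $\tilde\epsilon \le \epsilon$ (choosing $\tilde\epsilon$ slightly below $\epsilon$).'' That is backwards. From a rate bound of $1 - k^{-\tilde\epsilon}$ you need $k^{-\tilde\epsilon} \le k^{-\epsilon}$, i.e.\ $\tilde\epsilon \ge \epsilon$. Choosing $\tilde\epsilon$ slightly \emph{below} $\epsilon$ gives a strictly \emph{weaker} rate bound $1 - k^{-\tilde\epsilon} < 1 - k^{-\epsilon}$, and the claimed inequality in the theorem would fail. The paper correctly picks a rational $\tilde\epsilon$ with $\epsilon \le \tilde\epsilon$, arbitrarily close to $\epsilon$ from above; the price is that $m = k^{4+\tilde\epsilon}$ is then slightly larger than $k^{4+\epsilon}$, which is exactly why the theorem leaves $\tilde\epsilon$ as a separate parameter rather than claiming $m = k^{4+\epsilon}$. (You also need $\tilde\epsilon \le 1$ so that $t = 1+\tilde\epsilon \in (0,2]$; this is fine since $\epsilon \le 1$ and $\tilde\epsilon = 1$ is available as a rational choice at the boundary.) With that one inequality flipped, the proof matches the paper.
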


\begin{proof}~
 From Proposition~\ref{prop:lazebnik}, we know that for any $t \in (1, 2]$ such that $q^t$ is an integer, the bipartite graph $\Gc = \Bc_{s = 1, t}(q)$ obtained by Lazebnik {\em et al.} construction has $q^{3 + t}$ left nodes, $q^{3 + s} = q^{4}$ right nodes, and girth at least $8$. Also, note that each left node in $\Bc_{s = 1, t}(q)$ has degree $q^s = q$. Therefore, it follows from Theorem~\ref{thm:no_6cycle} that the bipartite graph $\Bc_{s = 1, t}(q)$ gives an $(n_1 = q^{3 + t}, N_1 = q^{3 + t} + q^4, k = q, m = N_1)$ primitive batch code $\Cc$. The rate of $\Cc$ is $\frac{q^{3 + t}}{q^{3 + t} + q^4} \geq 1 - \frac{q^{4}}{q^{3 + t}} = 1 - \frac{1}{q^{t-1}}$.

Now, for an $\epsilon > 0$, we pick a rational number $\tilde{\epsilon}$ such that $\epsilon \leq \tilde{\epsilon}$ for $\tilde{\epsilon}$ arbitrarily close to $\epsilon$.  By setting $t = 1 + \tilde{\epsilon}$ and choosing $k = q$ to be an odd prime power so that $q^{1 + \tilde{\epsilon}}$ is an integer\footnote{Note that for any rational number $\tilde{\epsilon}$ there is an infinite sequence of such odd prime powers.}, we have that $\Cc$ is an $(n_1 = k^{4 + \tilde{\epsilon}}, N_1 = k^{4 + \tilde{\epsilon}} + k^4, k, m = N_1)$ primitive batch code with rate at least $1 - \frac{1}{k^{\tilde{\epsilon}}} \geq 1 - \frac{1}{k^{{\epsilon}}}$. We now apply the gadget lemma (see Section~\ref{sec:prelim}) with $g = \ceilb{\frac{n}{n_1}} = \ceilb{\frac{n}{k^{4 + \tilde{\epsilon}}}}$ to obtain an $(n, N, k, m)$ batch code with rate at least  $1 - \frac{1}{k^{\epsilon}}$, $m =  k^{4 + \tilde{\epsilon}} + o( k^{4 + \tilde{\epsilon}})$, and $$N = \ceilb{\frac{n}{k^{4 + \tilde{\epsilon}}}}m.$$
\end{proof}

\subsection{Constructions based on the de Caen, Sz\'ekely graphs}
\label{sec:deCaen_const}

In \cite{decaen},  de Caen and Sz\'ekely used known constructions of 
generalized quadrangles to construct  bipartite graphs
with maximum number of edges and no $4$- and $6$-cycles. Generalized quadrangles are combinatorial structures defined as follows.
Let $\Pc$ be a set of {\em points}, and let $\mathcal{L}$ be a collection 
of subsets of $\Pc$ called {\em lines}.
$(\Pc, \mathcal{L})$ is called a {\em generalized quadrangle} of order 
$(s,t)$ ($s,t \geq 1$) if the following holds:

\begin{enumerate}
\item Each point is incident with $t+1$ lines, and two distinct points
are incident with at most one line.

\item Each line is incident with $s+1$ points, and two distinct lines are 
incident with at most one point.

\item If $x \in V$ is a point and $L \in \mathcal{L}$ is a line
not incident with $x$, then there is a unique pair $(y, M)$,
where $M$ is a line and $y$ is a point incident to $M$, such that
$x$ is incident to $M$ and $y$ is incident to $L$.
\end{enumerate}

The conditions above imply that the number of points $|\Pc|=(s+1)(st+1)$
and the number of lines $|\mathcal{L}|=(t+1)(st+1)$. Given any generalized quadrangle, we can represent it by a bipartite graph
$\Gc=(\Vc_1,\Vc_2, \Ec)$ in the straightforward way, by letting $\Vc_1 = \mathcal{L}$,
$\Vc_2=\Pc$ and $(L,x) \in \Ec$ if and only if $x$ is incident to $L$.
Note that the number of edges of this graph is $|\mathcal{L}|(s+1) = 
|\Pc|(t+1)=(s+1)(t+1)(st+1)$. 

The first two requirements in the definition of generalized quadrangles 
immediately imply that the bipartite graph representing a generalized
quadrangle cannot contain any 4-cycles, and the last condition immediately
implies that the graph cannot contain any 6-cycles. De Caen and Sz\'ekely observed that the bipartite graphs representing
generalized quadrangles in fact contain the largest possible
number of edges among graphs on the same number of vertices 
without $4$- and $6$-cycles.

Generalized quadrangles have been extensively studied
and several constructions are known for settings where $s$ and $t$ are
close to each other, e.g when $s=t$ or $s=q-1$ and $t=q+1$.
However, to obtain multiset batch codes
with rate close to 1, we need generalized quadrangles where $s$ and $t$
have different orders of magnitude.
There are two families of constructions of generalized quadrangles 
of this type known: for any prime power $q$, there are 
generalized quadrangles of order $(q,q^2)$ and of order $(q^2,q^3)$.
For more details on generalized quadrangles see \cite{quadbook1, quadbook2}. 

Generalized quadrangles of order $(q,q^2)$ yield multiset batch codes
with $m = (1 + o_k(1)) k^5$ and rate $1 - 1/(k-1)$.
The performance of these codes is similar to the codes based on the 
Lazebnik et al. construction when taking $\epsilon =1$ in that construction.

Generalized quadrangles of order $(q^2,q^3)$ yield multiset batch codes
with $m = (1+ o_k(1)) k^4$ and rate $1 - 1/\sqrt{k}$.

\begin{theorem}\label{thm:decaen}
Let $q$ be any prime power, and let $k=q^2$.
Then for any large enough integer $n$ 
there is an explicit $(n,N,k,m)$ multiset batch code with
rate at least $1 - 1/\sqrt{k}$, $m = (1 + o_k(1)) k^4$, 
and $N = \lceil \frac{n}{k^4} \rceil m$.
\end{theorem}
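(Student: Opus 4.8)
The plan is to instantiate Theorem~\ref{thm:no_6cycle} with the bipartite incidence graph of a generalized quadrangle of order $(q^2,q^3)$, and then to pass from the resulting primitive code to an arbitrary number of message symbols via the Gadget Lemma (Lemma~\ref{lem:gadget}). The only external input is the fact, recalled in this section, that for every prime power $q$ there exists a generalized quadrangle of order $(s,t)=(q^2,q^3)$.

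First I would set up the graph. Take such a generalized quadrangle $(\Pc,\mathcal{L})$ and form the bipartite graph $\Gc=(\Vc_1,\Vc_2,\Ec)$ with $\Vc_1=\mathcal{L}$ (the lines, which will become the message symbols) and $\Vc_2=\Pc$ (the points, which will become the parity symbols), placing an edge $(L,x)$ iff $x$ is incident to $L$. Using $s=q^2$, $t=q^3$ together with the counting identities $|\Pc|=(s+1)(st+1)$ and $|\mathcal{L}|=(t+1)(st+1)$, this graph has $|\Vc_1|=(q^3+1)(q^5+1)$ left nodes, each of degree $s+1=q^2+1$, and $|\Vc_2|=(q^2+1)(q^5+1)$ right nodes, each of degree $t+1=q^3+1$; orienting it this way (message symbols on the \emph{abundant} side $\mathcal{L}$ rather than on $\Pc$) is exactly what makes the rate large. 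As already observed in this section, axioms~1--2 of a generalized quadrangle forbid $4$-cycles and axiom~3 forbids $6$-cycles, and since the graph is bipartite its girth is therefore at least $8$. Let $\Cc$ be the systematic linear code over $\F_q$ with one message symbol per line and one parity symbol per point, the parity symbol at a point $x$ being the sum (all coefficients $1$) of the message symbols on the lines through $x$; then $\Gc_{\Cc}=\Gc$.

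Next I would invoke Theorem~\ref{thm:no_6cycle} with $\widetilde{\Gc}_{\Cc}=\Gc_{\Cc}$: every left node has degree $q^2+1\ge q^2=k$ and the girth is at least $8$, so $\Cc$ is an $(n_1,N_1,k,m_1=N_1)$ primitive multiset batch code with $n_1=(q^3+1)(q^5+1)$ and $N_1=n_1+|\Vc_2|=(q^5+1)(q^3+q^2+2)$. Its rate is
\[
\frac{n_1}{N_1}=\frac{q^3+1}{q^3+q^2+2}=1-\frac{q^2+1}{q^3+q^2+2}\ \ge\ 1-\frac1q\ =\ 1-\frac{1}{\sqrt k},
\]
where the inequality reduces to $q(q^2+1)=q^3+q\le q^3+q^2+2$; and $m_1=N_1=q^8+q^7+2q^5+q^3+q^2+2=(1+o_k(1))k^4$ since $q=\sqrt k\to\infty$.

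Finally I would apply the Gadget Lemma with $g=\ceilb{\frac{n}{k^4}}$. Because $n_1=(q^3+1)(q^5+1)>q^8=k^4$, we have $g\,n_1\ge g\,k^4\ge n$, so Lemma~\ref{lem:gadget} produces a $(g n_1,\,g N_1,\,k,\,m=N_1)$ multiset batch code, which in particular serves as an $(n,N,k,m)$ multiset batch code with $m=(1+o_k(1))k^4$, rate at least $1-1/\sqrt k$, and $N=g N_1=\ceilb{\frac{n}{k^4}}m$, as claimed. I do not expect any genuine obstacle here: the argument is an assembly of ingredients already in place. The one step that needs care is the parameter bookkeeping — in particular choosing the orientation of the incidence graph so that the many lines (not the few points) carry the message symbols, and verifying the elementary rate inequality above — and, as in the other constructions of this section, being content with the Gadget Lemma giving a code on $g n_1\ge n$ symbols rather than exactly $n$.
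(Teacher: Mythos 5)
Your proof is correct and follows essentially the same route as the paper's: take the bipartite incidence graph of a generalized quadrangle of order $(q^2,q^3)$ with the lines on the message side, observe that the quadrangle axioms give girth at least $8$ and left degree $q^2+1\ge k$, invoke Theorem~\ref{thm:no_6cycle}, compute $n_1,N_1$ and the rate, and finish with the Gadget Lemma. The paper states these steps more tersely (in particular it does not spell out the rate arithmetic or the choice $g=\lceil n/k^4\rceil$), but the substance is identical.
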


\begin{proof}~
The known constructions of generalized quadrangles of order $(q^2,q^3)$ 
immediately give bipartite graphs with $n_1 = |\Vc_1| = (q^3 +1)(q^5 +1)$,
$|\Vc_2|=(q^2+1)(q^5 +1)$, and left degree $k = q^2+1$.
We also know that these bipartite graphs have no $4$- and $6$-cycles.
Using Theorem~\ref{thm:no_6cycle}, this gives $(n_1,N_1,k,m)$ primitive multiset
batch codes with $N_1 = m = |\Vc_1| + |\Vc_2| = q^8  +  o_k(q^{8})=  (1 + o_k(1))k^4$.
The rate of the resulting primitive multiset batch code is 
$|\Vc_1|/(|\Vc_1| + |\Vc_2|) \geq  1 - 1/q = 1 - 1/\sqrt{k}$.
Applying the gadget lemma as in the previous subsection gives the
statement of the theorem.
\end{proof}

\subsection{Constructions based on zig-zag codes of~\cite{zigzag13}}
\label{sec:zigzag} 

In this subsection, we describe a construction for  primitive mutliset batch codes that allow for $m = \Theta(k^{k^{1 + \epsilon} + 1 + \epsilon})$ with any $0 < \epsilon$ while achieving rate at least $1 - (1/{k^{\epsilon}})$. In particular, we present a construction of  biregular bipartite graphs which are free of $4$- and $6$-cycles. For a given prime $k$ and an integer $r$, the graph has $rk^{r}$ left nodes and $k^{r+1}$ right nodes. The left degree and right degree of these graphs are $k$ and $r$, respectively. The construction presented here is based on the work of Tamo {\em et al.} on zig-zag codes~\cite{zigzag13}. This construction has been  previously employed in \cite{availability}, where it gives codes with disjoint repair groups, {\em i.e.}, the bipartite graph is free of $4$-cycles. Here, we go one step further and show that the bipartite graphs obtained from zig-zag construction are free of $6$-cycles as well. It then follows from Theorem~\ref{thm:no_6cycle} that these bipartite graphs give multiset batch codes. 

\textbf{Zig-zag construction  :} 
Let $k$ be a prime number\footnote{The construction also works when $k$ is a prime power. However, for the ease of exposition, we only present the construction when $k$ is a prime number.} and $r$ be an integer. Each of the $rk^r$ left nodes of the bipartite graph are represented by an $(r+1)$-dimensional vectors such that 
$$
\Vc_{1} = \{\lv = (l_0, l_1, l_2,\ldots, l_r)~:~l_0 \in [r]~\text{and}~l_i \in [0:(k-1)]~\text{for}~i\in [r]\}.
$$
Similarly, $k^{r+1}$ right nodes of the bipartite graphs are denoted by an $(r +1 )$-dimensional vectors such that
$$
\Vc_{2} = \{\vv = (v_0,v_1, v_2,\ldots, v_r)~:~v_i \in [0:(k-1)]~\text{for}~i \in [0:r]\}.
$$
Given a left node $\lv = (l_0, l_1, l_2,\ldots, l_r)$ and a right node $\vv = (v_0,v_1, v_2,\ldots, v_r)$, there is an edge between $\lv$ and $\vv$, {\em i.e.}, $(\lv, \vv) \in \Ec$  iff
\begin{align}
\label{eq:edge_zigzag}
(l_1, l_2,\ldots, l_r) + v_0\ev_{l_0} = (v_1, v_2,\ldots, v_r)~({\rm mod}~k).
\end{align}
Here, $\ev_{l_0} \in \mathbb{Z}^r$ denotes the standard unit vector which has $1$ at the $l_0$-th coordinate and zeros at other $r-1$ coordinates. From \eqref{eq:edge_zigzag}, a given left node $\lv$, has $k$ neighbors for $k$ values of $v_0$. Similarly, a right node $\vv$ has $r$ left neighbors as for each $l_0 \in [r]$ there is a unique vector $(l_1, l_2,\ldots, l_r) \in [0:(k-1)]^r$ such that \eqref{eq:edge_zigzag} holds.

In \cite[Claim 1]{availability}, it is shown that the bipartite graph obtained using zig-zag construction  has no $4$-cycle\footnote{The Claim $1$ in \cite{availability} states this result in an alternative manner.}. Here, we state this result without its proof. 

\begin{lemma}{\cite{availability}}
\label{lem:4_zigzag}
The biregular bipartite graph obtained by zig-zag construction  has no $4$-cycle.
\end{lemma}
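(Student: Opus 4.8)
The plan is to argue by contradiction directly from the explicit edge condition \eqref{eq:edge_zigzag}. Suppose the zig-zag bipartite graph contains a $4$-cycle. In a bipartite graph this means there are two \emph{distinct} left nodes $\lv = (l_0,l_1,\dots,l_r)$, $\lv' = (l'_0,l'_1,\dots,l'_r)$ and two \emph{distinct} right nodes $\vv = (v_0,v_1,\dots,v_r)$, $\vv' = (v'_0,v'_1,\dots,v'_r)$ such that all four pairs $(\lv,\vv),(\lv,\vv'),(\lv',\vv),(\lv',\vv')$ lie in $\Ec$. Writing \eqref{eq:edge_zigzag} for each pair gives four identities in $\mathbb{Z}_k^r$: $(l_1,\dots,l_r) + v_0\ev_{l_0} \equiv (v_1,\dots,v_r)$, and analogously with $(v_0,\vv)$ replaced by $(v'_0,\vv')$, and with $\lv$ replaced by $\lv'$.

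First I would subtract the $(\lv,\vv)$ identity from the $(\lv,\vv')$ identity, and likewise the $(\lv',\vv)$ identity from the $(\lv',\vv')$ identity; both differences equal $(v'_1-v_1,\dots,v'_r-v_r)$, so $(v'_0-v_0)\,\ev_{l_0} \equiv (v'_0-v_0)\,\ev_{l'_0} \pmod{k}$. This one displayed congruence is the heart of the argument; the rest is a short case split. If $v_0 = v'_0$, then the first subtraction already forces $(v_1,\dots,v_r) = (v'_1,\dots,v'_r)$, hence $\vv = \vv'$, contradicting distinctness of the right nodes. If $v_0 \neq v'_0$, I would split on $l_0$ versus $l'_0$: when $l_0 \neq l'_0$, the standard unit vectors $\ev_{l_0},\ev_{l'_0}$ differ, so reading off coordinate $l_0$ of the displayed congruence gives $v_0 - v'_0 \equiv 0 \pmod k$, whence $v_0 = v'_0$ (a contradiction) because $v_0,v'_0 \in [0:k-1]$; when $l_0 = l'_0$, the unit vectors coincide, so subtracting the $(\lv,\vv)$ identity from the $(\lv',\vv)$ identity gives $(l_1,\dots,l_r) \equiv (l'_1,\dots,l'_r) \pmod k$, hence $l_i = l'_i$ for $i \in [r]$ (again using that labels lie in $[0:k-1]$), and together with $l_0 = l'_0$ this yields $\lv = \lv'$, contradicting distinctness of the left nodes.

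The only point requiring care — what I would flag as the ``main obstacle,'' such as it is — is the repeated passage from congruences modulo $k$ to genuine equalities: this is legitimate precisely because every coordinate of a left or right label lies in the range $[0:k-1]$ (and $l_0 \in [r]$), and because two standard unit vectors with $0/1$ entries are congruent mod $k$ iff they are literally equal. Note that primality of $k$ is not actually needed for the $4$-cycle claim; it only enters the companion $6$-cycle analysis. This reproduces \cite[Claim~1]{availability}, which records the same fact in the equivalent ``disjoint repair groups'' language, so the proof above is a transcription of that statement into the cycle terminology used in this section.
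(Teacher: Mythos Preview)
Your proof is correct. The paper does not actually prove Lemma~\ref{lem:4_zigzag}; it merely quotes the statement from \cite{availability} and explicitly says ``we state this result without its proof,'' so there is no in-paper argument to compare against. Your contradiction argument via subtracting the four instances of \eqref{eq:edge_zigzag} is exactly in the spirit of the paper's own proof of the companion $6$-cycle statement (Lemma~\ref{lem:girth_zigzag}), which uses the same difference-of-edge-relations technique, and your observation that primality of $k$ is not needed for the $4$-cycle case is also correct.
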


Next, we establish that the bipartite graphs generated by zig-zag construction  are free of $6$-cycles as well. 

\begin{lemma}
\label{lem:girth_zigzag}
Let $\Gc$ be a biregular bipartite graph obtained from zig-zag construction . Then,  the girth of $\Gc$ is at least $8$. 
\end{lemma}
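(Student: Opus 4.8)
The plan is to show that $\Gc$ has no $4$-cycle (which is Lemma~\ref{lem:4_zigzag}, already available) and no $6$-cycle; since $\Gc$ is bipartite, any cycle has even length, so ruling out $4$- and $6$-cycles forces girth at least $8$. Thus the only real work is to rule out $6$-cycles. A $6$-cycle in the bipartite graph alternates between left nodes and right nodes, so it has the form $\lv^{(1)}, \vv^{(1)}, \lv^{(2)}, \vv^{(2)}, \lv^{(3)}, \vv^{(3)}$ with the six nodes distinct, where each consecutive pair is an edge. I would write each left node as $\lv^{(a)} = (l_0^{(a)}, \underline{l}^{(a)})$ with $\underline{l}^{(a)} \in [0:k-1]^r$, and each right node as $\vv^{(b)} = (v_0^{(b)}, \underline{v}^{(b)})$, and translate all six edge conditions using~\eqref{eq:edge_zigzag}: $\underline{l}^{(a)} + v_0^{(b)} \ev_{l_0^{(a)}} = \underline{v}^{(b)} \pmod k$ for the appropriate incident pairs.

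Next I would set up the argument by examining the three ``left-center'' vertices $\lv^{(1)}, \lv^{(2)}, \lv^{(3)}$ and their first coordinates $l_0^{(1)}, l_0^{(2)}, l_0^{(3)} \in [r]$. The key structural fact, already used implicitly in the no-$4$-cycle argument, is that from a right node $\vv$, the left neighbor with a prescribed first coordinate $l_0$ is \emph{unique}. So along the cycle, the two left nodes adjacent to a given right node $\vv^{(b)}$ must have distinct first coordinates (otherwise they would be equal, collapsing the cycle). This gives constraints: for each right node on the cycle, its two left neighbors differ in the $l_0$-coordinate. I expect the argument to split into cases according to the pattern of equalities among $l_0^{(1)}, l_0^{(2)}, l_0^{(3)}$. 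If all three are equal, subtracting edge equations in pairs around the cycle should yield a contradiction of the same ``parallel edge / uniqueness'' type as in the $4$-cycle case. If exactly two coincide, say $l_0^{(1)} = l_0^{(2)} \neq l_0^{(3)}$, then the two right nodes flanking $\lv^{(3)}$ are forced by uniqueness, and tracking $\underline{v}$-coordinates yields that some pair of the supposedly distinct nodes actually coincide. If all three are distinct, I would sum the six edge relations around the cycle: the telescoping sum of the $\underline{v}$-differences is zero, leaving $\sum \pm v_0^{(b)} \ev_{l_0^{(a)}} = \zerov \pmod k$, and because the three unit vectors $\ev_{l_0^{(1)}}, \ev_{l_0^{(2)}}, \ev_{l_0^{(3)}}$ are linearly independent over $\Z_k$, each coefficient (a signed sum of two $v_0$-values) must vanish mod $k$; combining these with the individual edge equations should again collapse two cycle vertices.

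The main obstacle I anticipate is bookkeeping: getting the signs and the matching of which left node is adjacent to which right node correct as one goes around the $6$-cycle, and then in each case carefully deducing which two of the six nominally-distinct vertices are forced to be equal (or which edge is forced to be parallel to another in the underlying structure). The linear-independence step in the ``all distinct'' case is clean, but the mixed cases require care because the unit vectors may repeat, so the cancellation is only partial and one must instead exploit the uniqueness-of-preimage property directly on individual coordinates. I would organize the writeup as: (1) reduce to showing no $6$-cycle; (2) fix notation for the cycle and write the six edge equations; (3) dispose of the case where two consecutive left nodes share a first coordinate by the uniqueness/parallel-edge argument; (4) in the remaining case, sum the equations, invoke independence of the relevant $\ev_{l_0}$'s, and derive $v_0^{(b)} = v_0^{(b')}$ for suitable indices, which together with equal $\underline{v}$-parts (from the edge equations) forces two right nodes to coincide, contradicting distinctness. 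Conclude that $\Gc$ has girth at least $8$.
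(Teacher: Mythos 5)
Your proposal is correct and follows essentially the same approach as the paper: both reduce to ruling out a $6$-cycle, both first show (via the uniqueness of the left neighbor of a right node with a prescribed first coordinate) that the three $l_0$-coordinates on the cycle must be pairwise distinct, and both then derive the contradiction in that all-distinct case. The paper phrases the final step as a coordinate-by-coordinate analysis of the differences $\underline{v}^i - \underline{v}^j$, while you phrase it as summing the six edge equations around the cycle and invoking linear independence of the three distinct unit vectors to force equal $v_0$'s and hence coincident right nodes — the same underlying calculation in slightly different notation.
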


\begin{proof}~
We have from Lemma~\ref{lem:4_zigzag} that $\Gc$ is free of $4$-cycles. Here, we show that it is not possible to have a $6$-cycle in $\Gc$ which establishes the Lemma~\ref{lem:girth_zigzag}. In order to show contradiction, we assume that there is a $6$-cycle $\{\lv^1, \vv^1, \lv^2, \vv^2, \lv^3, \vv^3\}$ in the graph $\Gc$. Here, $\{\lv^1,  \lv^2,  \lv^3\}$ and $\{\vv^1, \vv^2, \vv^3\}$ represent three distinct left and right nodes, respectively. From \eqref{eq:edge_zigzag} we have the following relations.
\begin{subequations}
  \begin{align}
 \label{eq:block1a}
(l^1_1, l^1_2,\ldots, l^1_r) + v^1_0\ev_{l^1_0} = (v^1_1, v^1_2,\ldots, v^1_r)~({\rm mod}~k),\\
(l^2_1, l^2_2,\ldots, l^2_r) + v^1_0\ev_{l^2_0} = (v^1_1, v^1_2,\ldots, v^1_r)~({\rm mod}~k).
 \label{eq:block1b}
  \end{align}
\end{subequations}
\begin{subequations}
  \begin{align}
 \label{eq:block2a}
(l^2_1, l^2_2,\ldots, l^2_r) + v^2_0\ev_{l^2_0} = (v^2_1, v^2_2,\ldots, v^2_r)~({\rm mod}~k),\\
(l^3_1, l^3_2,\ldots, l^3_r) + v^2_0\ev_{l^3_0} = (v^2_1, v^2_2,\ldots, v^2_r)~({\rm mod}~k).
 \label{eq:block2b}
  \end{align}
\end{subequations}
and
\begin{subequations}
  \begin{align}
 \label{eq:block3a}
(l^3_1, l^3_2,\ldots, l^3_r) + v^3_0\ev_{l^3_0} = (v^3_1, v^3_2,\ldots, v^3_r)~({\rm mod}~k), \\
(l^1_1, l^1_2,\ldots, l^1_r) + v^3_0\ev_{l^1_0} = (v^3_1, v^3_2,\ldots, v^3_r)~({\rm mod}~k).
 \label{eq:block3b}
  \end{align}
\end{subequations}
By subtracting \eqref{eq:block3b} from \eqref{eq:block1a} we obtain
\begin{align}
\label{eq:i1}
 (v_0^1 - v^3_0)\ev_{l^1_0} = (v^1_1 - v^3_1, v^1_2 - v^3_2,\ldots, v^1_r - v^3_r)~({\rm mod}~k).
\end{align}
We need to have $v^1_0 \neq v^3_0$, otherwise from \eqref{eq:i1} we have $\vv^1 = \vv^2$, which is not true as $\vv^1$ and $\vv^2$ are distinct nodes. By subtracting \eqref{eq:block1b} from \eqref{eq:block1a}, we obtain
 \begin{align}
\label{eq:v1}
(l^1_1 - l^2_1, l^1_2 - l^2_2,\ldots, l^1_r - l^2_r) + v_0^1(\ev_{l^1_0} - \ev_{l^2_0}) = 0.
\end{align}
Again, we need to have $l^1_0 \neq l^2_0$. Otherwise, \eqref{eq:v1} implies that $\lv^1 = \lv^2$, which is not true. Similarly, we can obtain the following sets of equations
\begin{align}
\label{eq:i2}
 (v_0^1 - v^2_0)\ev_{l^2_0} = (v^1_1 - v^2_1, v^1_2 - v^2_2,\ldots, v^1_r - v^2_r)~({\rm mod}~k), \\
 (v_0^2 - v^3_0)\ev_{l^3_0} = (v^2_1 - v^3_1, v^2_2 - v^3_2,\ldots, v^2_r - v^3_r)~({\rm mod}~k).
\label{eq:i3}
\end{align}
and
\begin{align}
\label{eq:v2}
(l^2_1 - l^3_1, l^2_2 - l^3_2,\ldots, l^2_r - l^3_r) + v_0^2(\ev_{l^2_0} - \ev_{l^3_0}) = 0, \\
(l^3_1 - l^1_1, l^3_2 - l^1_2,\ldots, l^3_r - l^1_r) + v_0^3(\ev_{l^3_0} - \ev_{l^1_0}) = 0.
\label{eq:v3}
\end{align}

Combining \eqref{eq:v1}, \eqref{eq:v2} and \eqref{eq:v3}, we obtain that $l^1_0, l^2_0$ and $l^3_0$ are three distinct integers in $[r]$. Similarly, it follows from \eqref{eq:i1}, \eqref{eq:i2}, and \eqref{eq:i3} that $v^1_0, v^2_0$ and $v^3_0$ are distinct integers. Moreover, we have from \eqref{eq:i1}, \eqref{eq:i2}, and \eqref{eq:i3} that
\begin{subequations}
 \begin{align}
 \label{eq:s1a}
 v^1_ {l^1_0} &\neq v^3_ {l^1_0}\\
 v^1_ {z} &= v^3_ {z},~\text{for}~z \in [r]\backslash\{l^1_0\}.
 \label{eq:s1b}
  \end{align}
\end{subequations}
\begin{subequations}
  \begin{align}
 \label{eq:s2a}
 v^1_ {l^2_0} &\neq v^2_ {l^2_0}\\
 v^1_ {z} &= v^2_ {z},~\text{for}~z \in [r]\backslash\{l^2_0\}.
 \label{eq:s2b}
  \end{align}
\end{subequations}
and
\begin{subequations}
  \begin{align}
 \label{eq:s3a}
 v^2_ {l^3_0} &\neq v^3_ {l^3_0}\\
 v^2_ {z} &= v^3_ {z},~\text{for}~z \in [r]\backslash\{l^3_0\}.
 \label{eq:s3b}
  \end{align}
\end{subequations}
Note that $l^1_0$, $l^2_0$, and $l^3_0$ are distinct integers. Therefore, the equation \eqref{eq:s1a} to \eqref{eq:s3b} lead to contradiction as it is not possible to simultaneously satisfy all of these equations. Thus, there is no $6$-cycle in the graph $\Gc$.
\end{proof}

The following result presents the parameters of the multiset batch codes obtained from zig-zag construction . 
\begin{theorem}
\label{thm:zigzag}
For any fixed integer $k$, a large enough integer $n$, and $\epsilon > 0$, there is an explicit $(n, N, k, m)$ multiset batch code with rate at least $1 - \frac{1}{k^{\epsilon}}$, $m = \Theta(k^{k^{1 + \epsilon} + 1 + \epsilon})$, and $$N = \ceilb{\frac{n}{k^{k^{1 + \epsilon} + 1 + \epsilon}}}m.$$
\end{theorem}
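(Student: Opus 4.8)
The plan is to apply the zig-zag graph construction together with the girth bound established in Lemma~\ref{lem:girth_zigzag} and the sufficient condition in Theorem~\ref{thm:no_6cycle}, and then use the Gadget Lemma to move from a primitive batch code of fixed size to one that handles $n$ message symbols. First I would instantiate the zig-zag construction for the fixed prime $k$ and a parameter $r$ to be chosen. By the construction, the biregular bipartite graph $\Gc$ has $|\Vc_1| = r k^r$ left nodes, $|\Vc_2| = k^{r+1}$ right nodes, left degree $k$, and right degree $r$; by Lemmas~\ref{lem:4_zigzag} and~\ref{lem:girth_zigzag} it has girth at least $8$. Taking $\widetilde{\Gc}_{\Cc} = \Gc$ itself (no need to pass to a proper induced subgraph), Theorem~\ref{thm:no_6cycle} immediately yields a primitive multiset batch code $\Cc_0$ with parameters $n_1 = r k^r$, $N_1 = m_1 = r k^r + k^{r+1} = k^r(r + k)$, and $k$ parallel reads.

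Next I would compute the rate of $\Cc_0$, namely $\frac{r k^r}{r k^r + k^{r+1}} = \frac{r}{r + k} = 1 - \frac{k}{r+k} \ge 1 - \frac{k}{r}$. To get rate at least $1 - \frac{1}{k^{\epsilon}}$ it therefore suffices to choose $r$ so that $\frac{k}{r} \le \frac{1}{k^{\epsilon}}$, i.e. $r \ge k^{1+\epsilon}$; I would take $r = \lceil k^{1+\epsilon} \rceil$. With this choice the number of servers is $m_1 = k^r(r+k)$, and since $r = \Theta(k^{1+\epsilon})$ we have $k^r = k^{\Theta(k^{1+\epsilon})}$ and $r + k = \Theta(k^{1+\epsilon})$, so $m_1 = \Theta(k^{k^{1+\epsilon}}\cdot k^{1+\epsilon}) = \Theta(k^{k^{1+\epsilon} + 1 + \epsilon})$, matching the claimed bound. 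Finally, applying Lemma~\ref{lem:gadget} (the Gadget Lemma) with $g = \ceilb{\frac{n}{n_1}} = \ceilb{\frac{n}{k^{k^{1+\epsilon}+1+\epsilon}}}$ — here I use that $n_1 = r k^r = \Theta(k^{k^{1+\epsilon}+1+\epsilon})$, absorbing constants — produces an $(n, N, k, m)$ multiset batch code with $m = m_1 = \Theta(k^{k^{1+\epsilon}+1+\epsilon})$, rate at least $1 - \frac{1}{k^{\epsilon}}$ (the gadget operation preserves the rate since it scales both $n$ and $N$ by $g$), and $N = gN_1 = \ceilb{\frac{n}{k^{k^{1+\epsilon}+1+\epsilon}}}m$.

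The only genuinely non-routine step is Lemma~\ref{lem:girth_zigzag} (the absence of $6$-cycles in the zig-zag graph), which has already been proved in the excerpt; everything in the theorem proof proper is bookkeeping. The one point I would be careful about is matching the exponent in $m$ and in $N$ to the stated $k^{k^{1+\epsilon}+1+\epsilon}$ up to the $\Theta(\cdot)$: since $r = \lceil k^{1+\epsilon}\rceil$ rather than exactly $k^{1+\epsilon}$, the exponent $r$ differs from $k^{1+\epsilon}$ by at most $1$, contributing only a bounded multiplicative factor $k^{O(1)}$, which is harmless inside $\Theta(\cdot)$; and the factor $r+k = \Theta(k^{1+\epsilon})$ accounts for the "$+1+\epsilon$" in the exponent. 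I would state this reconciliation explicitly so the reader sees why the ceiling and the additive $k$ do not affect the asymptotic claim.
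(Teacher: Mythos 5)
Your proof is correct and follows essentially the same route as the paper's: instantiate the zig-zag graph with prime $k$, invoke Lemma~\ref{lem:girth_zigzag} and Theorem~\ref{thm:no_6cycle} to get a primitive multiset batch code of rate $\tfrac{r}{r+k}$, take $r \approx k^{1+\epsilon}$, and finish with the Gadget Lemma. The only difference is that you make the integer-rounding of $r$ and the resulting $\Theta$-accounting explicit, which the paper relegates to a footnote.
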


\begin{proof}~
 From Lemma~\ref{lem:girth_zigzag}, we know that a bipartite graph $\Gc$ obtained by zig-zag construction  has $rk^r$ left nodes and $k^{r + 1}$ right nodes, and girth at least $8$. Moreover, the degree of each left node in $\Gc$ is $k$. Therefore, it follows from Theorem~\ref{thm:no_6cycle} that $\Gc$ provides an $(n_1 = rk^{r}, N_1 = (r + k)k^{r}, k, m = N_1)$ primitive batch code $\Cc$ with rate $\frac{rk^r}{rk^r + k^{r + 1}} = \frac{r}{r + k} \geq 1 - \frac{k}{r}$. 

We now select $r = k^{1 + \epsilon}$, which gives us the rate of the code at least $1 - \frac{1}{k^{\epsilon}}$\footnote{If $k^{1+ \epsilon}$ is not an integer, $r$ can be made integer by taking the floor or the ceiling of $k^{1+ \epsilon}$.}. This particular choice for the parameter $r$ implies that $\Cc$ is an $(n_1 = k^{k^{1+\epsilon} + 1+ \epsilon}, N_1 = (k^{\epsilon} + 1)k^{k^{1 + \epsilon}+ 1}, k, m = N_1)$ primitive batch code. Now using gadget lemma (see Section~\ref{sec:prelim}) with $g = \ceilb{\frac{n}{n_1}} =  \ceilb{\frac{n}{k^{k^{1 + \epsilon} + 1 + \epsilon}}}$, we obtain an $(n, N, k, m)$ batch code $g\cdot\Cc$ with rate at least $1 - \frac{1}{k^{\epsilon}}$, $
m = \Theta(k^{k^{1 + \epsilon} + 1 + \epsilon})$, and 
$$N = \ceilb{\frac{n}{k^{k^{1 + \epsilon} + 1 + \epsilon}}}m.$$
\end{proof}

\subsection{Limitations of the bipartite graph based constructions}\label{limit}

We note that the bipartite graph based constructions described above 
exhibit a tradeoff between the parameter $m$ as a function of $k$,
 and the rate of the codes
obtained as a function of $k$. 
Ideally, we would like to minimize $m$ (the number of servers) as a function 
of $k$ (the number of read requests
that can be supported) and maximize the rate as a function of $k$.

Our constructions achieve close to optimal tradeoffs by 
the following estimates of de Caen and Sz\'ekely \cite{decaen}.

\begin{theorem}{\cite{decaen, decaen2}}\label{estim}
Let $G=(\Vc_1, \Vc_2, \Ec)$ be a bipartite graph without 4- and 6-cycles,
where $|\Vc_1|=n_1$, $|\Vc_2|=n_2$ and $\sqrt{n_1} \leq n_2 \leq n_1$.
Then $|\Ec| = O(n_1^{2/3}n_2^{2/3})$.
\end{theorem}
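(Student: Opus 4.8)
The plan is to prove the extremal bound $|\Ec| = O(n_1^{2/3} n_2^{2/3})$ for bipartite graphs $G = (\Vc_1, \Vc_2, \Ec)$ with $|\Vc_1| = n_1$, $|\Vc_2| = n_2$, $\sqrt{n_1} \le n_2 \le n_1$, containing no $4$-cycle and no $6$-cycle, by a double-counting argument over short paths. First I would count \emph{cherries} (paths of length two) with both endpoints in $\Vc_1$ and midpoint in $\Vc_2$: this number is $\sum_{v \in \Vc_2} \binom{d(v)}{2}$. The absence of a $4$-cycle means no two vertices of $\Vc_1$ have two common neighbors in $\Vc_2$, so this sum is at most $\binom{n_1}{2} = O(n_1^2)$. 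By convexity (Jensen), writing $|\Ec| = \sum_{v \in \Vc_2} d(v)$, this already gives $|\Ec| = O(n_1 \sqrt{n_2})$, the classical Kővári–Sós–Turán-type bound. The point of also forbidding $6$-cycles is to push this further.

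Next I would exploit the no-$6$-cycle condition by counting longer configurations. The natural object is the number of paths $x_1 - v_1 - x_2 - v_2$ of length three with $x_1, x_2 \in \Vc_1$ and $v_1, v_2 \in \Vc_2$, or equivalently walks of length four between two vertices of $\Vc_2$. A $6$-cycle corresponds to two internally disjoint paths of length three between the same pair of $\Vc_2$-vertices; more carefully, the absence of both $4$- and $6$-cycles forces that between any two distinct vertices $u, v \in \Vc_2$ there is at most one path of length four (after discarding the degenerate/backtracking walks, which are controlled by the degree sequence and the no-$4$-cycle condition). Hence the number of such paths is $O(n_2^2)$. On the other hand, one lower-bounds this count from the degree sequences: the number of paths of length three is roughly $\sum$ over edges $(x, v)$ of $(d(x) - 1)(d(v) - 1)$, which, combined with the cherry count and convexity, yields a system of inequalities relating $|\Ec|$, $n_1$, and $n_2$.

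To make the bookkeeping clean I would set $e = |\Ec|$ and use the two inequalities
\begin{align}
\sum_{v \in \Vc_2} \binom{d(v)}{2} &= O(n_1^2), \\
\sum_{x \in \Vc_1} \binom{d(x)}{2} &= O(n_2^2),
\end{align}
where the second follows symmetrically by counting cherries with midpoint in $\Vc_1$ — but this alone is not enough and requires the $6$-cycle-free condition to be used in a genuinely path-of-length-three count rather than a cherry count. The correct extra ingredient is: the number of paths $v_1 - x_1 - v_2 - x_2 - v_3$ with the $v_i \in \Vc_2$ pairwise distinct and $x_1, x_2 \in \Vc_1$ is $O(n_2^3)$, because for each ordered pair $(v_1, v_3)$ the middle is essentially determined (no $6$-cycle) and there are $O(n_2)$ choices of $v_2$. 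Expanding this count in terms of $\sum_{x} \binom{d(x)}{2}$ and the degrees, and then applying Hölder/convexity to convert the moment bounds into a bound on $e = \sum_v d(v)$, gives $e^3 = O(n_1^2 n_2^2)$, i.e. $e = O(n_1^{2/3} n_2^{2/3})$; the hypothesis $\sqrt{n_1} \le n_2 \le n_1$ guarantees the convexity steps are not lossy and the error terms coming from low-degree vertices and backtracking walks are absorbed.

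The main obstacle I expect is the careful handling of degenerate walks and the precise combinatorial translation of "no $4$-cycle and no $6$-cycle" into a clean upper bound on the number of length-four walks between pairs of $\Vc_2$-vertices: one must separate genuine paths (which are $O(n_2^2)$ many) from walks that revisit a vertex (which are bounded by $\sum d(v)^2$-type quantities already controlled by the no-$4$-cycle bound), and ensure the dominant term is the one giving the claimed exponent. A secondary technical point is verifying that the convexity/Hölder step loses nothing under the stated range $\sqrt{n_1} \le n_2 \le n_1$ — this is exactly the regime in which the balanced estimate $O(n_1^{2/3} n_2^{2/3})$ beats the Kővári–Sós–Turán bound, so the hypothesis is not cosmetic. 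Since this is the estimate of de Caen and Székely cited as \cite{decaen, decaen2}, I would ultimately just invoke their argument, but the sketch above is the route I would reconstruct.
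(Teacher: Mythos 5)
The paper does not actually prove Theorem~\ref{estim}; it is quoted as a known extremal estimate of de Caen and Sz\'ekely, so there is no in-paper argument to compare your sketch against. That said, the sketch has a genuine gap, and as written it would not yield the exponent $2/3$.

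The problem lies in the "correct extra ingredient." You bound the number of paths $v_1-x_1-v_2-x_2-v_3$ (with $v_i\in\Vc_2$ distinct, $x_j\in\Vc_1$) by $O(n_2^3)$ and attribute this to the absence of $6$-cycles, but this count uses only the absence of $4$-cycles: once the triple $(v_1,v_2,v_3)\in\Vc_2^3$ is fixed, $x_1$ and $x_2$ are forced because two right vertices have at most one common neighbor, which is precisely the no-$C_4$ condition; no-$C_6$ plays no role. (Moreover, no-$C_6$ does \emph{not} make the "middle essentially determined" for a pair $(v_1,v_3)$: two internally disjoint length-$4$ paths between them close up into an $8$-cycle, which girth $\geq 8$ permits, so a single pair can support $\Theta(n_2)$ such paths.) Writing $s(v)=\sum_{x\in N(v)}(d(x)-1)$, your count is essentially $\sum_v s(v)^2=O(n_2^3)$; but this already follows from the two $C_4$-consequences $\sum_v s(v)\leq n_2^2$ and $\max_v s(v)\leq n_2$ via $\sum_v s(v)^2\leq(\max_v s(v))\sum_v s(v)$, so no new information is extracted. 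And no argument relying only on $C_4$-freeness can reach $O((n_1n_2)^{2/3})$: with $n_1=n_2=n$, the point-line incidence graph of a projective plane is $C_4$-free with $\Theta(n^{3/2})$ edges, while the target bound is $\Theta(n^{4/3})$. The genuine content of girth $\geq 8$ is that balls of radius~$3$ are trees, equivalently at most one path of length~$3$ joins any $x\in\Vc_1$ to any $v\in\Vc_2$, giving $\sum_{(x,v)\in\Ec}(d(x)-1)(d(v)-1)\leq n_1n_2$. The hard step in de Caen--Sz\'ekely --- the one your "H\"older/convexity" phrase glosses over --- is \emph{lower}-bounding this sum in terms of $|\Ec|$ for irregular graphs; plain convexity does not give $|\Ec|^3/(n_1n_2)$ as a lower bound, since high-degree vertices on one side may be joined only to low-degree vertices on the other. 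Without that step the sketch does not close.
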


Note that we can assume $n_2 \geq \sqrt{n_1}$, since otherwise
$|\Ec|= O(n_1)$ \cite{decaen2} which would only allow a constant number
of read requests $k$.

These estimates imply that assuming rate at least $1/2$ the number of servers
$m$ has to be at least $\Omega(k^{3})$
in multiset batch codes based on bipartite graphs without
4- and 6-cycles. 
More precisely, assuming rate at least $1 - \frac{1}{k^{\alpha}}$
implies $m = \Omega(k^{3 + 2 \alpha})$. 
Thus, our results give close to optimal tradeoffs for 
multiset batch codes based on bipartite graphs.

We also note that any systematic primitive multiset batch code where 
decoding uses at most 
one non-systematic code symbol for every request (as in our code
constructions) must have an associated bipartite graph without 4-
and 6-cycles. 
Thus, our results in Sections~\ref{sec:Balbuena_const} and \ref{sec:deCaen_const} give close to 
optimal tradeoffs between the number of servers and the rate for multiset batch codes of this type.

\section{On Fault Tolerance of Batch Codes}
\label{sec:distance} 

In~\cite{batch}, Ishai {\em et al.} study batch codes with {\em load balancing} as the primary focus of their work. However, in practice the codes for distributed storage systems need to allow for mechanisms to protect data against catastrophic events. These catastrophic events may correspond to the permanent or persistent loss of a large number of servers due to multiple reasons, including hardware failures, power outages, and loss of rack switches in data centers~\cite{ford2010, DeanSoCC}. Therefore, it is desirable to work with codes that have large minimum distance. Here, we note that Lipmaa and Skachek have recently shown that a binary linear batch code which supports any $k$-request patterns has minimum distance at least $k$~\cite{LS14}.

Here, we describe a straightforward yet powerful way to combine the notion of load balancing in batch codes with the issue of fault tolerance. We show that one can modify a given batch code to get a code with large minimum distance at the cost of rate loss. In particular, given a field of large enough size and a batch code with rate $1 - o_k(1)$, one can get a code which is also a batch code and has minimum distance arbitrarily close to the Singleton bound.

\begin{lemma}
\label{lem:dist_composition}
Let $\Cc^{\rm G}$ be an $[n, N^{\rm G}, d^{\rm G}]$\footnote{An  $[n, N, d]$ code is an $[n, N]$ code with minimum distance $d$.} systematic linear code with minimum distance $d^{\rm G}$, and $\Cc^{\rm B}$ be an $(n, N^{\rm B}, k, m^{\rm B} = N^{\rm B})$ systematic primitive batch code. Then, it is possible to obtain an $(n, N = N^{\rm B} +  N^{\rm G} - n, k, m = N)$ systematic primitive batch code  $\Cc = \Cc^{\rm G} \circ \Cc^{\rm B}$ with distance at least $d^{\rm G}$.
\end{lemma}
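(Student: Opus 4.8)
The plan is to construct the code $\Cc = \Cc^{\rm G} \circ \Cc^{\rm B}$ by concatenating, in a symbol-by-symbol fashion, the redundancy produced by the two constituent codes on top of a shared systematic part. Concretely, write the systematic generator matrix of the batch code as $G^{\rm B} = [\,I_{n}\ |\ E^{\rm B}\,]$, where $E^{\rm B}$ has $N^{\rm B}-n$ columns, and the systematic generator matrix of the high-distance code as $G^{\rm G} = [\,I_{n}\ |\ E^{\rm G}\,]$, where $E^{\rm G}$ has $N^{\rm G}-n$ columns. Define $\Cc$ by the generator matrix $G = [\,I_{n}\ |\ E^{\rm B}\ |\ E^{\rm G}\,]$, so that a message $\xv$ is encoded to the $N = N^{\rm B} + N^{\rm G} - n$ symbols consisting of $\xv$ itself, followed by the $N^{\rm B}-n$ parity symbols of $\Cc^{\rm B}$, followed by the $N^{\rm G}-n$ parity symbols of $\Cc^{\rm G}$. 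Each of these $N$ symbols is placed in its own bucket, so $m = N$ and $\Cc$ is primitive.

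First I would verify the distance claim, which is the easy direction: the codeword of $\Cc$ obtained from $\xv$ contains, on a fixed subset of $N^{\rm G}$ coordinates (the systematic positions together with the $\Cc^{\rm G}$-parities), exactly the codeword of $\Cc^{\rm G}$ associated to $\xv$. Hence for any nonzero $\xv$ the weight of the $\Cc$-codeword is at least the weight of the corresponding $\Cc^{\rm G}$-codeword, which is at least $d^{\rm G}$; therefore the minimum distance of $\Cc$ is at least $d^{\rm G}$. Since $\Cc$ is linear this suffices.

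Next I would verify that $\Cc$ is still an $(n, N, k, m=N)$ primitive batch code. The key observation is that $\Cc$ contains $\Cc^{\rm B}$ as a ``prefix'': the first $N^{\rm B}$ buckets of $\Cc$ store precisely the $N^{\rm B}$ symbols of the $\Cc^{\rm B}$-encoding of $\xv$. Given any $k$-request pattern, simply run the decoding procedure of $\Cc^{\rm B}$ on these first $N^{\rm B}$ buckets; it produces $k$ pairwise disjoint groups of buckets, each serving one request, and these groups are automatically disjoint from (and simply ignore) the extra $\Cc^{\rm G}$-parity buckets. Thus the same assignment of disjoint bucket-groups that works for $\Cc^{\rm B}$ works verbatim for $\Cc$, and $\Cc$ supports every $k$-request pattern (multiset patterns too, if $\Cc^{\rm B}$ is a multiset batch code). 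The parameters $N = N^{\rm B} + N^{\rm G} - n$ and $m = N$ are immediate from the construction.

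I do not expect a serious obstacle here: both directions are essentially bookkeeping, and the construction is designed so that the systematic part is shared, which is exactly what lets the batch-decoding property pass through untouched while the global code's distance is inherited. The only point requiring a little care is making explicit that the systematic positions are common to both constituent codes (so that ``$E^{\rm B}$ and $E^{\rm G}$ act on the same $\xv$''), and noting that reordering buckets is harmless; this is where I would be slightly careful in the write-up, but it is not a genuine difficulty.
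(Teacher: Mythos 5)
Your proposal is correct and follows essentially the same route as the paper's proof: both construct $\Cc$ by keeping the shared systematic part and appending the parities of $\Cc^{\rm B}$ followed by the parities of $\Cc^{\rm G}$, then note that each $\Cc^{\rm G}$-codeword appears as a sub-vector (giving the distance bound) and that puncturing to the first $N^{\rm B}$ positions recovers $\Cc^{\rm B}$ (giving the batch property). Your generator-matrix phrasing $G = [\,I_{n}\ |\ E^{\rm B}\ |\ E^{\rm G}\,]$ is just a more explicit presentation of the same construction.
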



\begin{proof}~
Let $\Cc^{\rm G}$ be an $[n, N^{\rm G} = n + p, d^{\rm G}]$ systematic linear code with minimum distance $d^{\rm G}$ which encodes a length-$n$ message vector $\xv = (x_1, x_2,\ldots, x_n) \in \Sigma^n$ to a codeword $$\cv^{\rm G} = (c^{\rm G}_1= x_1, c^{\rm G}_2 = x_2,\ldots, c^{\rm G}_n = x_n, c^{\rm G}_{n+1},\ldots, c^{\rm G}_{N^{\rm G}}) \in \Sigma^{N^{\rm G}}.$$ Let $\Cc^{\rm B}$ be an $(n, N^{\rm B}, k, m = N^{\rm B})$ systematic primitive batch code which encodes the length-$n$ message vector $(x_1, x_2,\ldots, x_n) \in \Sigma^n$ to a codeword $$\cv^{\rm B} = (c^{\rm B}_1 = x_1, c^{\rm B}_2 = x_2,\ldots, c^{\rm B}_n = x_n, c^{\rm B}_{n+1},\ldots,  c^{\rm B}_{N^{\rm B}}).$$ Given $\Cc^{\rm B}$ and $\Cc^{\rm G}$, we obtain a systematic code $\Cc = \Cc^{\rm G}\circ\Cc^{\rm B}$ which encodes the length-$n$ message vector $(x_1, x_2,\ldots, x_{n}) \in \Sigma^{n}$ to a codeword 
$$
\cv = (c_1, c_2,\ldots, c_{N = N^{\rm B} + p}) = (x_1, x_2,\ldots, x_n, c^{\rm B}_{n + 1}, \ldots, c^{\rm B}_{N^{\rm B}}, c^{\rm G}_{n + 1},\ldots, c^{\rm G}_{N^{\rm G}}) \in \Sigma^{N^{\rm B} +  N^{\rm G} - n}.
$$
It is clear from the construction that the minimum distance of $\Cc$ is at least as large as the minimum distance of $\Cc^{\rm G}$ as the codewords of $\Cc^{\rm  G}$ are sub-vectors of the codewords of $\Cc$. Similarly, one can obtain the codeword corresponding to $\xv$ in $\Cc^{\rm B}$ by puncturing the codeword associated with $\xv$ in $\Cc$. Therefore, $\Cc$ is a batch code that can support sequences of $k$ read requests.

\end{proof}

\begin{remark}
\label{re:dmin_batch}
Note that Lemma~\ref{lem:dist_composition} holds for multiset batch codes as well. Also, we state Lemma~\ref{lem:dist_composition} only for primitive batch codes. One can employ the gadget lemma (see~Section~\ref{sec:prelim}) with primitive multiset batch code $\Cc = \Cc^{\rm G} \circ \Cc^{\rm B}$ to obtain general multiset batch codes which can tolerate the failure of any $d^{\rm G} - 1$ servers (buckets).
\end{remark}

Taking $\Sigma$ to be a field with large enough size and $\Cc^{\rm G}$ to a systematic MDS code, we obtain the following result. Note that if $\Sigma$ is not sufficiently large, we can create parities over the extension field $\Sigma^{\rm B}$.

\begin{lemma}
\label{lem:MDS_batch}
For large enough integers $k$ and $n$, let $\Cc^{\rm B}$ be an $(n, N^{\rm B}, k, m^{\rm B} =  N^{\rm B})$ systematic primitive batch code with rate at least $1 - o_{k}(1)$. Further assume that $\Sigma$ is a field with large enough size. Then for any rate $0 < \rho < 1$, it is possible to obtain an $(n, N, k, m = N)$ primitive batch code $\Cc$ with minimum distance $d \geq (N - n + 1) - o_{k}(n)$.
\end{lemma}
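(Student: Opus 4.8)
The plan is to apply Lemma~\ref{lem:dist_composition} with $\Cc^{\rm G}$ taken to be a systematic MDS code and then chase the parameters. First I would fix the target rate $\rho \in (0,1)$ and choose the MDS code $\Cc^{\rm G}$ to be an $[n, N^{\rm G}]$ systematic MDS code whose dimension-to-length ratio is arranged so that the final composed code $\Cc = \Cc^{\rm G} \circ \Cc^{\rm B}$ has rate exactly (or at least) $\rho$. Since by Lemma~\ref{lem:dist_composition} the composed code has length $N = N^{\rm B} + N^{\rm G} - n$ and keeps the $n$ systematic symbols, choosing $N^{\rm G} = n + p$ gives $N = N^{\rm B} + p$; so I want $p$ (the number of extra MDS parities) to satisfy $n/(N^{\rm B}+p) \le \rho$, i.e. $p \ge n/\rho - N^{\rm B}$. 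Because $\Cc^{\rm B}$ has rate $1 - o_k(1)$, we have $N^{\rm B} = n(1 + o_k(1))$, so this lower bound on $p$ is of order $n(1/\rho - 1) + o_k(n) = \Theta(n)$, which is a legitimate positive quantity for large enough $k$; take $p = \lceil n/\rho \rceil - N^{\rm B}$ (or any slightly larger value). A systematic $[n, n+p]$ MDS code exists over $\Sigma$ as long as $|\Sigma|$ is large enough relative to $n+p$ (Reed--Solomon suffices), which is exactly the hypothesis; if the original alphabet of $\Cc^{\rm B}$ is too small, build the parities over an extension field as noted in the remark preceding the lemma.

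Next I would invoke Lemma~\ref{lem:dist_composition} directly: it yields that $\Cc = \Cc^{\rm G} \circ \Cc^{\rm B}$ is an $(n, N = N^{\rm B} + p, k, m = N)$ systematic primitive batch code with minimum distance $d \ge d^{\rm G}$. Since $\Cc^{\rm G}$ is MDS of dimension $n$ and length $n + p$, its distance is $d^{\rm G} = p + 1$. Therefore $d \ge p + 1 = (N - n) + 1 = N - n + 1$; in fact here the bound is met with equality up to the slack introduced by the ceiling, so $d \ge N - n + 1 - o_k(n)$, which is the claimed near-Singleton distance. The rate bound $\rho$ holds by the choice of $p$. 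This is all that the statement asks for.

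The only genuinely delicate point is bookkeeping the $o_k$ terms: one must verify that the ``$-o_k(n)$'' slack in the distance claim is actually available. The source of slack is twofold — the ceiling in $p = \lceil n/\rho \rceil - N^{\rm B}$ contributes an additive $O(1)$, and more importantly, if one insists on hitting a \emph{fixed} $\rho$ exactly rather than ``at least $\rho$'', one may need to round $p$ up, but since $N^{\rm B} = n + o_k(n)$ any such rounding is absorbed into $o_k(n)$. I would also double-check the edge condition that $p$ as chosen is nonnegative: this requires $N^{\rm B} \le n/\rho$, equivalently $1 + o_k(1) \le 1/\rho$, which holds for all large $k$ since $\rho < 1$ is a fixed constant bounded away from $1$ by a constant gap $1 - \rho$. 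So ``for large enough $k$ and $n$'' in the hypothesis is exactly what makes the construction go through, and no further obstruction arises; the proof is essentially a one-line application of Lemma~\ref{lem:dist_composition} once the MDS redundancy $p$ is calibrated to the target rate.
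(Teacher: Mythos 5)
Your approach matches the paper's exactly: apply Lemma~\ref{lem:dist_composition} with a systematic MDS outer code $\Cc^{\rm G}$ and track parameters, with $\rho^{\rm G}=n/N^{\rm G}$ tunable to hit any target rate. One bookkeeping slip worth flagging: since $N = N^{\rm B} + p$ (not $n + p$), the MDS distance gives $d \ge p + 1 = N - N^{\rm B} + 1$, not the $N - n + 1$ you write in the intermediate step (the latter would say the composed code is MDS, which is false). The $o_k(n)$ slack in the final bound is therefore supplied chiefly by $N^{\rm B} - n = o_k(n)$, the parity overhead of the inner batch code, rather than by the ceiling/rounding you attribute it to; your final inequality $d \ge N - n + 1 - o_k(n)$ is nonetheless correct because you have already recorded $N^{\rm B} = n + o_k(n)$.
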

\begin{proof}~
From the assumption on the rate of $\Cc^{\rm B}$, we know that 
\begin{align}
\label{eq:nN}
n \geq (1 - o_{k}(1))N^{\rm B}. 
\end{align}
Now, we take $\Cc^{\rm G}$ in Lemma~\ref{lem:dist_composition} to be an $[n, N^{\rm G}, N^{\rm G} - n + 1]$ systematic MDS code, which gives us an $(n, N = N^{\rm B} +  N^{\rm G} - n, k, m = N)$ systematic primitive batch code with minimum distance
$$
d \geq N^{\rm G} - n + 1.
$$
The above expression can be rewritten as 
\begin{align}
\label{eq:MDS_batch}
d &\geq N + n -  N^{\rm B} - n + 1 \nonumber \\
& = N - n + 1 - (N^{\rm B} - n) \nonumber \\
& \overset{(i)}{\geq} N - n + 1 - o_{k}(1)N^{\rm B} \nonumber \\
&\overset{(ii)}{=} N - n + 1 - o_{k}(n).
\end{align}
Here, $(i)$ and $(ii)$ follow from \eqref{eq:nN}. The rate $\rho$ of $\Cc$ is equal to $\frac{\rho^{\rm G}}{1 + o_k(1)}$, where $\rho^{\rm G} = \frac{n}{N^{\rm G}}$ denotes the rate of $\Cc^{\rm G}$. Note that we can select $\rho^{\rm G}$ to be any number in the interval $(0, 1)$.
\end{proof}


\bibliographystyle{abbrv}
\bibliography{batch_codes}

\newpage


\end{document}